\documentclass[twocolumn]{IEEEtran} 
\usepackage[T1]{fontenc}
\usepackage{color}
\usepackage{mathrsfs}
\usepackage{amsmath}
\usepackage{amsthm} 
\usepackage{amssymb}
\usepackage{graphicx}
\usepackage{cite}
\usepackage{makecell}
\usepackage{algorithm}
\usepackage{algorithmic}
\usepackage{fancyhdr}
\usepackage{diagbox}
\usepackage{bm}
\usepackage{lettrine}
\usepackage[unicode=true,
bookmarks=true,bookmarksnumbered=true,bookmarksopen=true,bookmarksopenlevel=1,
breaklinks=false,pdfborder={0 0 0},pdfborderstyle={},backref=false,colorlinks=false]
{hyperref}
\hypersetup{
pdftitle={Your Title},
pdfauthor={Your Name},
pdfpagelayout=OneColumn, pdfnewwindow=true, pdfstartview=XYZ, plainpages=false}

\setlength{\columnsep}{0.15 in}

\usepackage{multirow} 
\usepackage{xcolor}

\allowdisplaybreaks[4]

\usepackage[caption=false,font=footnotesize]{subfig}

\interdisplaylinepenalty=2500
\IEEEoverridecommandlockouts

\usepackage{geometry}
\geometry{left=0.625in, right=0.625in, bottom=1in, top=0.75in}
\setlength{\abovedisplayskip}{5pt plus 3pt minus 1pt}
\setlength{\belowdisplayskip}{5pt plus 3pt minus 1pt}

\pagestyle{fancy}
\fancyhf{}
\fancyhead[R]{\fontsize{9}{11}\selectfont \thepage}
\usepackage{cleveref}

\newtheorem{lemma}{Lemma}

\begin{document}

\title{\textcolor{black}{Distributionally Robust Optimization for Aerial Multi-access Edge Computing via Cooperation of UAVs and HAPs}}
\author{\IEEEauthorblockN{Ziye Jia, \IEEEmembership{Member, IEEE,} Can Cui, Chao Dong, \IEEEmembership{Member, IEEE, }Qihui Wu, \IEEEmembership{Fellow, IEEE, }\\
Zhuang Ling, \IEEEmembership{Member, IEEE, }Dusit Niyato, \IEEEmembership{Fellow, IEEE, }and Zhu Han, \IEEEmembership{Fellow, IEEE}}
\thanks{This work was supported  in part by Jiangsu Province Frontier Leading Technology Basic Research Project under Grant BK 20222013,  in part by National Natural Science Foundation of China under Grant 62301251, in part by the Natural Science Foundation of Jiangsu Province of China under Project BK20220883, in part by  the open research fund of National Mobile Communications Research Laboratory, Southeast University (No. 2024D04), in part by the Aeronautical Science Foundation of China 2023Z071052007, in part by the Young Elite Scientists Sponsorship Program by CAST 2023QNRC001, and partially supported by NSF  ECCS-2302469, CMMI-2222810, Toyota. Amazon and Japan Science and Technology Agency (JST) Adopting Sustainable Partnerships for Innovative Research Ecosystem (ASPIRE) JPMJAP2326. (\textit{Corresponding authors: Can Cui and Chao Dong.})}
\thanks{Ziye Jia is with the College of Electronic and Information Engineering, Nanjing University of Aeronautics and Astronautics, Nanjing 211106, China, and also with the National Mobile Communications Research Laboratory, Southeast University, Nanjing, Jiangsu, 211111, China (e-mail: jiaziye@nuaa.edu.cn).}
\thanks{Can Cui, Chao Dong and Qihui Wu are with the College of Electronic and Information Engineering, Nanjing University of Aeronautics and Astronautics, Nanjing 211106, China (e-mail: cuican020619@nuaa.edu.cn; dch@nuaa.edu.cn; wuqihui@nuaa.edu.cn).}
\thanks{Zhuang Ling is with the College of Communication Engineering, Jilin University, Changchun 130012, China (e-mail: lingzhuang@jlu.edu.cn).}
\thanks{Dusit Niyato is with the School of Computer Science and Engineering, Nanyang Technological University, Singapore 639798 (e-mail: dniyato@ntu.edu.sg).}
\thanks{Zhu Han is with the University of Houston, Houston, TX 77004 USA, and also with the Department of Computer Science and Engineering, Kyung Hee University, Seoul 446-701, South Korea (e-mail: hanzhu22@gmail.com).}
}
\maketitle

\begin{abstract}
With an extensive increment of computation demands, the aerial multi-access edge computing (MEC), mainly based on unmanned aerial vehicles (UAVs) and high altitude platforms (HAPs), plays significant roles in future network scenarios. In detail, UAVs can be flexibly deployed, while HAPs are characterized with large capacity and stability. Hence, in this paper, we provide a hierarchical model composed of an HAP and multi-UAVs, to provide aerial MEC services. Moreover, considering the errors of channel state information from unpredictable environmental conditions, we formulate the problem to minimize the total energy cost with the chance constraint, which is a mixed-integer nonlinear problem with uncertain parameters and intractable to solve. To tackle this issue, we optimize the UAV deployment via the weighted K-means algorithm. Then, the chance constraint is reformulated via the distributionally robust optimization (DRO). Furthermore, based on the conditional value-at-risk mechanism, we transform the DRO problem into a mixed-integer second order cone programming, which is further decomposed into two subproblems via the primal decomposition. Moreover, to alleviate the complexity of the binary subproblem, we design a binary whale optimization algorithm. Finally, we conduct extensive simulations to verify the effectiveness and robustness of the proposed schemes by comparing with baseline mechanisms.
\end{abstract}

\begin{IEEEkeywords}
	Aerial multi-access edge computing, resource allocation, distributionally robust optimization (DRO), conditional value-at-risk (CVaR), primal decomposition, binary whale optimization (BWOA).
\end{IEEEkeywords}
\section{Introduction}
\lettrine[lines=2]{I}{N} recent years, with the extensive growth of computational intensive tasks, the multi-access edge computing (MEC) technique is raised to provide services for various ground users (GUs) in the sixth generation (6G) communication networks \cite{Intelligent-Guo,Servic-Tian,Blockchain-Xu,Joint-Zhao}. However, as for the GUs in remote areas such as deserts and oceans, it lacks ground infrastructures to provide communication coverage and MEC services \cite{LEO-Jia,AoI-Zhang}. As forecasted that the global unmanned aerial vehicles (UAVs) market is expected to reach the worth of \$55.8 billion by 2030, UAVs show the colossal potential in the applications of future industries \cite{Toward-Bai}. The non-terrestrial networks can provide ubiquitous coverage for the remote GUs, in which UAVs can be flexibly and quickly deployed on demand with low cost \cite{Unmanned-Zhan, Adaptive-Chen,Multi-UAV-Guo, Computation-You, Deep-Wang, Adaptive-Chen-Wen}. However, UAVs are limited by the load capacity of computing module, battery, etc. Alternatively, the high altitude platform (HAP), suspending above 20km at a quasi-static position, and equipped with stronger computing resources and sufficient energy, can compensate for the resource-limited UAVs \cite{Communication-Karabulut}, \cite{Vision-Karabulut}. Besides, compared with HAPs, UAVs are relatively flexible and can be rapidly deployed to meet the sudden surge in data requests. Therefore, in the context of 6G networks, the cooperation of UAVs and HAPs can provide flexible and stable aerial MEC for GUs in diverse applications to reduce latency and improve the quality of service (QoS).

Unfortunately, in the aerial MEC network, the tasks generated from GUs may be heterogeneous with different QoS demands such as the tolerated delay. Besides, the resources of aerial MEC networks, such as communication, computing, and energy are limited, in which the energy supply is the basic for all operations \cite{Resource-Li}. Hence, how to guarantee the QoS of GUs and take full advantage of aerial resources is a key issue for ubiquitous communication and computation services in the 6G networks\cite{Mobile-Zhou}, \cite{Intelligent-Dong}. Furthermore, considering the unpredictable environmental fluctuations, the communication link from the GU to UAV (G2U) is highly dynamic, and the channel state information (CSI) is imperfect, which may cause errors and mismatches between the realistic situation and ideal circumstance \cite{Energy-Sheng}. Such errors bring more challenges for the resource allocation scheme in the aerial MEC. Besides, how to cooperate UAVs and HAPs for efficient data offloading and resource optimization is also challenging.

To deal with the above challenges, in this paper, we propose an aerial MEC framework composed of UAVs and an HAP to cooperatively serve the GUs in remote areas, with the consideration of imperfect transmission CSI. In detail, an uncertainty set is constructed to capture the potential random parameters and a chance constraint for task latency with CSI estimation errors is formulated. Then, considering multi-resource constraints for UAVs and the HAP, we formulate the problem to minimize the total energy consumption, with regard to UAV positions, GU-UAV connection decisions, offloading strategies and resource allocation. Since the problem is in the form of mixed integer non-linear programming (MINLP) and NP-hard to solve \cite{Joint-Yu}, we firstly cluster the GUs and deploy UAVs at appropriate positions via proposing the weighted K-means deployment (WKD) based algorithm with a low time complexity. Taking into account the different characteristics of tasks, the weighted distance metric is applied so that the importance of different tasks is incorporated. Then, we reformulate the chance constraint without distribution information into a mixed integer second order cone programming (MISOCP) form by employing the distributionally robust optimization (DRO) and conditional value-at-risk (CVaR) mechanism. To reduce the complexity, via the primal decomposition, we further decompose the MISOCP problem into two subproblems with respect to the offloading decisions and computing resource allocation, respectively. The problem concerning resource allocation is solved by a standard convex toolkit. Moreover, to tackle the integer programming problem related to the offloading decision, we design a meta-heuristic algorithm termed as binary whale optimization algorithm (BWOA).

The main contributions of this work are summarized as follows.
\begin{itemize}
	\item We propose a hierarchical aerial MEC model composed of an HAP and multi-UAVs to provide services for remote GUs, in which UAVs can be deployed flexibly and the HAP provides stable and strong computing services. Besides, the CSI estimation error is modeled by an uncertainty set based on the historical statistical information and the time latency requirement is formulated as a chance constraint.
	
	\item To handle the problem of multi-UAV deployment, a WKD based algorithm is designed. Then, by the DRO and CVaR based mechanism, the chance constraint is reformulated into an MISOCP form.
	
	\item To tackle the reformulated mixed integer programming (MIP) problem with the MISOCP constraint, by leveraging the primal decomposition, it is decomposed into two subproblems. The subproblem on the resource allocation is convex and solved via CVX. To further reduce the complexity of the binary offloading subproblem, we design the BWOA.
	
	\item Extensive simulations are conducted to evaluate the proposed algorithms under various circumstances. The robustness of the designed algorithms with CSI estimation errors is verified. Moreover, by comparing with other baseline algorithms, the effectiveness and low-complexity of the proposed algorithms are verified.
\end{itemize}

The rest of this paper is arranged as follows. Related works are presented in Section \ref{sec2}. Section \ref{sec3} proposes the system model and problem formulation. Algorithms are designed in Section \ref{sec4}. Simulations and numerical results are provided in Section \ref{sec5}. Finally, we draw conclusions in Section \ref{sec6}.
\section{Related Works}\label{sec2}
As for the UAV-based MEC, there exist abundant recent researches. For instance, \cite{Multi-Agent_Zhao} proposed a collaborative MEC frame and exploited a deep reinforcement learning method to jointly optimize resource allocation, UAV trajectory, and task scheduling. \cite{Multi-UAV-Zhan} discussed a multi-UAV-enabled MEC system and solved the decoupled two subproblems via alternating optimization and successive convex approximation mechanisms. Considering the competitive relationship among UAVs, \cite{Joint_Li} formulated a joint optimization problem for the multi-dimensional resource constrained UAV-MEC network and put forward a triple learner based approach. While these studies made significant progresses in the resource and trajectory optimization, improved frameworks are still worth considering. In \cite{Resource_Liu}, the authors designed a three-stage alternating algorithm to address issues concerning energy consumption in the UAV-related MEC system. In \cite{Multi-Agent-Ning}, a deep reinforcement learning approach was devised to minimize the computation cost in the multi-UAV based MEC system. \cite{Evolutionary-Song} investigated a multi-objective optimization problem in the MEC network to minimize the delay and energy consumption as well as maximize the number of collected tasks of UAVs. In \cite{Computation_Liu}, a hierarchical UAV-assisted MEC framework was studied to minimize the sum of latency and energy consumption, in which a method by jointly combining deep reinforcement learning and convex optimization was designed. \cite{Optimization_Luo} presented a two-layer optimization framework to reduce the energy consumption for the UAV-based MEC. The authors in \cite{Joint_Wang} designed a two-layer optimization approach and tackled the 0-1 integer programming problem by a greedy algorithm, in which by jointly optimizing task scheduling and UAV locations, the energy consumption in the MEC system was minimized. \cite{Robust_Li} designed a robust multi-agent approximation strategy to address the uncertainties of CSI and task complexity, which was solved by the multi-agent deep reinforcement learning. However, limited by their battery capacities, the applications of UAVs are restricted in terms of the large-scale service provision for delay sensitivity and computation intensive tasks.

Although the above researches have made great progresses in the resource and trajectory optimization, UAVs are limited by their battery capacities and have shortcomings in providing better services for delay sensitive and computationally intensive tasks. Different from UAVs with constrained capabilities, HAPs can provide strong payloads and stable coverage, which contributes to completing intensive MEC services. In recent years, some works begin to explore the applications of HAPs based MEC services. For example, \cite{Design_Granelli} jointly deployed multi-UAVs and an HAP to provide connectivity and computing service for GUs. \cite{Hierarchical_Jia} focused on data offloading in the UAV-HAP MEC system and developed a matching-based algorithm to maximize the total processed data. In \cite{AI-based_Maryam}, a multi-dimensional resource allocation problem in the UAV-HAP system was designed to minimize the average age of information in response to the uncertain errors of CSI, and a learning-based algorithm was presented to tackle this non-convex problem. Due to the limited resources of the aerial MEC platforms, more studies focused on the issue to improve the energy consumption and resource utilization. In \cite{Energy-Chen}, a resource allocation problem minimizing the energy cost was studied for the UAV-HAP assisted MEC, and solved by the distributed online algorithm based on the game theory. The authors in \cite{Energy-Song} focused on the energy-efficient trajectory optimization problem in the UAV-HAP based MEC system and designed a modified multi-objective reinforcement learning algorithm. \cite{Cooperative_Cao} employed the K-means and multi-agent reinforcement learning algorithms for resource utilization in the UAV-HAP assisted MEC system. The authors in \cite{Deep_Cheng} investigated a computation offloading problem in the UAV-HAP aerial MEC system, and a markov game was conducted to enhance the energy harvesting performance for UAVs. \cite{AoI-Song} built a multi-objective Markov decision process model towards the age of information and energy tradeoff problem in which UAVs and HAPs cooperatively provided MEC services for ground devices.

As analyzed above, the cooperation of UAVs and HAPs leverages the advantages of the aerial platforms for better energy efficiency, lower latency, and higher capability. Nevertheless, in these studies, the CSI errors are mostly ignored or following a specific distribution, which is impractical due to the various interferences caused by actual environment. The inability to estimate accurate CSI in the practical scenarios brings more challenges for efficient resource allocations. Therefore, it is essential to exploit robust algorithms to deal with the unpredictable fluctuation from the environment. Based on above considerations, in this paper, we focus on the cooperation of UAVs and the HAP to provide robust aerial MEC services for GUs, with the consideration of the imperfect CSI by the uncertainty set.

\section{System Model and Problem Formulation}\label{sec3}

In this section, a two-layer aerial MEC model is proposed in Section \ref{subsection-1}. Then, the communication model and offloading model are proposed in Sections \ref{subsection-2} and \ref{subsection-3}, respectively. The problem formulation is detailed in Section \ref{subsection-4}.

\subsection{Aerial MEC Model}\label{subsection-1}

As shown in Fig. \ref{model}, a hierarchical aerial MEC system is proposed. $M$ GUs indicated by the set $\mathcal{M}=\{1,2,\ldots,M\}$, $m\in\mathcal{M}$, are randomly distributed within the remote areas. $N$ UAVs equipped with edge servers, the set of which is denoted by $\mathcal{N}=\{1,2,\ldots,N\}$, $n\in\mathcal{N}$, are deployed to provide MEC services. An HAP, denoted by $h$, hovers at a fixed position and acts as the supplement for the resource-limited UAVs. The Cartesian coordinate is utilized to represent the locations. $\mathbf{w}_m$ denotes the location of GU $m$, where $\mathbf{w}_m=(x_m,y_m)$. All UAVs are assumed to hover at the same altitude $z_n$, and the horizontal deployment position of UAV $n$ is denoted by $v_n=(x_n,y_n)$. After deployment, the UAVs hover in the air and are regarded as quasi-stationary. The distance between GU $m$ and UAV $n$ is calculated as:
\begin{equation}
	d_{m,n}\hspace{-0.25em}=\hspace{-0.25em}\sqrt{(x_m-x_n)^2+(y_m-y_n)^2+z_n^2},\forall m \in \mathcal{M}, \forall n\in\mathcal{N}.\label{equation1}
\end{equation}
Since UAVs are equipped with limited computing resources, an HAP with strong capabilities is deployed in the upper layer to assist with computing, whose location is denoted by $\varpi _h=(x_h,y_h,z_h)$. Therefore, the distance between UAV $n$ and HAP $h$ is
\begin{equation}
	d_{n,h}=\sqrt{(x_n-x_h)^2+(y_n-y_h)^2+(z_n-z_h)^2}, \forall n\in\mathcal{N}.
\end{equation}
\begin{figure}
	\centering{\includegraphics[width=0.85\linewidth]{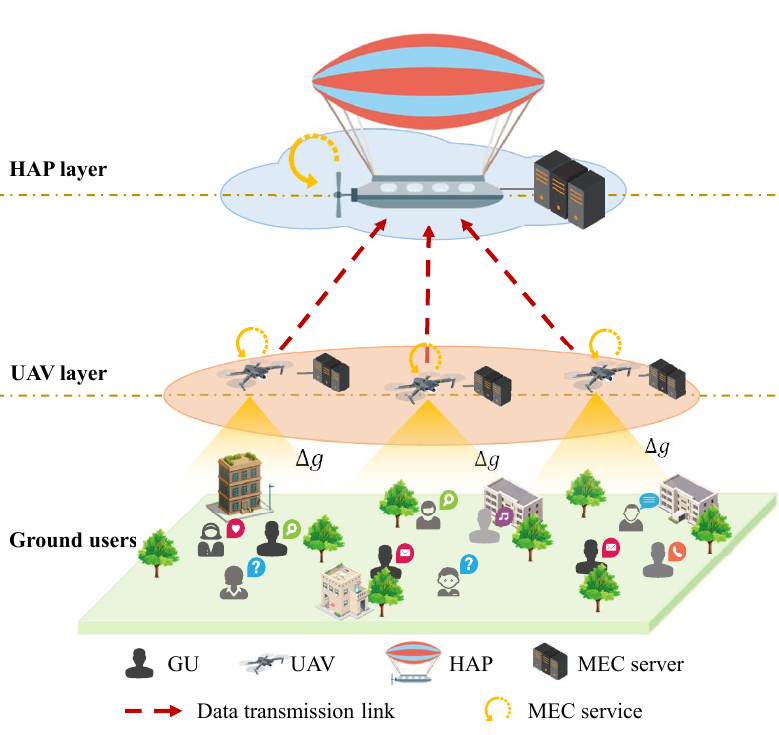}}
	\caption{Aerial MEC model composed of UAVs and an HAP.}
	\label{model}
\end{figure}

The task generated from GU $m$ is denoted as $\left(L_m,c_m,T_m^{max}\right)$, where $L_m$ represents the size of task data, $c_m$ indicates the required CPU cycles to process 1bit data and $T_m^{max}$ is the maximum tolerable delay of task $m$. To guarantee the delay limitation, the task needs to be completed within $T_m^{max}$. We consider that the task cannot be divided and the task offloading pattern is the binary mode. Let binary variable $\delta_m^n$ indicate the connection relationship between GU $m$ and UAV $n$, i.e.,
\begin{equation}
	\delta_m^n=
	\begin{cases}
		1, \quad \text{GU }m\text{ is connected with UAV }n,\\
		0, \quad \text{otherwise.}
	\end{cases}
\end{equation}
Considering the accessing constraint, each GU can only connect to one UAV, we have
\begin{equation}\label{c7}
	\sum_{n=1}^N\delta_m^n=1, \forall m\in \mathcal{M}.
\end{equation}

If a UAV is not able to provide sufficient computing resources for the task, or the delay limitation is unable to be satisfied, then the task is forwarded to the HAP for processing. In this case, the UAV performs as a relay. In detail, binary variable $\lambda_m^n$ is introduced to indicate whether the task collected by UAV $n$ is forwarded to the HAP, i.e.,
\begin{equation}
	\lambda_m^n=\begin{cases}
		1, \quad\text{task }m\text{ is forwarded to the HAP by UAV }n,\\
		0, \quad\text{otherwise.}
	\end{cases}
\end{equation}
The HAP is assumed to process at most $H$ tasks simultaneously, and so we have
\begin{equation}\label{c8}
	\sum_{m=1}^M\sum_{n=1}^N\lambda_m^n\leq H.
\end{equation}

\subsection{Communication Model}\label{subsection-2}
\subsubsection{G2U Channel Model} The G2U channel is a large-scale fading model \cite{Service-Zheng} and can be regarded as a line-of-sight channel. The uplink transmission channel gain under ideal condition of GU $m$ is given as:
\begin{equation}
	\bar{g}_m^u=\sum_{n=1}^N\frac{\delta_m^ng_0^u}{d_{m,n}^2},\forall m\in \mathcal{M},
\end{equation}
where $g_0^u$ is the power gain at the reference distance $d_0=1$m.

Since the G2U channel is time-varying and vulnerable with the impacts from obstacles, complicated terrains and electromagnetic interferences, the CSI cannot be obtained precisely. In other words, there exist CSI estimation errors between the ideal and realistic environments, due to the inevitable disturbances and interferences from the environment. Accordingly, we denote the actual G2U channel gain as $g_{m}^u$:
\begin{equation}
	g_{m}^u=\bar{g}_{m}^u+\Delta_{m},\forall m\in \mathcal{M},
\end{equation}
where $\Delta_{m}$ is the unmeasurable CSI estimation error. Clearly, it is difficult to obtain accurate results or probability distributions of the CSI errors in real situations. Therefore, we consider that the moment estimation information can be obtained from the historical statistical data. In particular, the uncertainty set $\mathcal{P}$ is constructed to describe all the possible distributions of the random errors, i.e.,
\begin{equation}
	\mathcal{P}=\biggl\{\mathbb{P}\in\mathcal{P}\Bigg|
		\begin{aligned}
			\mathbb{E_P}(\Delta_{m})=\mu_m,\\
	        \mathbb{D_P}(\Delta_{m})=\sigma_m^{2},
		\end{aligned}
	\biggr\},
\end{equation}
where $\mu_m$ is the mean of random parameters $\Delta_{m}$ under distribution $\mathbb{P}$, and $\sigma_m^{2}$ is the corresponding variance. The uncertainty set $\mathcal{P}$ comprises all possible probability distributions of the random CSI estimation error $\Delta_m$, i.e., $\mathbb{P}\in\mathcal{P}$. Besides, to simplify the communication model, we adopt the orthogonal frequency division multiple access (OFDMA) technology. In this way, GUs are enabled to transmit their data simultaneously, and the mutual interference is correspondingly ignored. According to the Shannon formula, the uplink rate of G2U channel is
\begin{equation}
	r_{m}^u=B_u\log_2\left(1+\frac{p_ug_{m}^u}{n_0B_u}\right),\forall m\in \mathcal{M},
\end{equation}
where $B_u$ denotes the bandwidth allocated to each task, $p_u$ is the transmitting power of GUs, and $n_0$ represents the power spectrum density of additive white noise. Then, the uplink transmission delay of GU $m$ is
\begin{equation}
	t^u_{m}=\frac{L_m}{r^u_m},\forall m\in \mathcal{M}.
\end{equation}
\subsubsection{U2H Channel Model}
Different from the vulnerable G2U link, there are few obstacles or environment reflection disturbances in the UAV-to-HAP (U2H) link. Therefore, characterized with a wider view, we consider that the U2H link is estimated precisely \cite{AI-based_Maryam}. Moreover, the OFDMA is adopted in the U2H channel to avoid interferences. Consequently, considering the free space loss and rain attenuation, the maximum achievable rate from UAV $n$ to HAP $h$ is \cite{Joint-Li,Two-Li,Cooperative-Kang}
\begin{equation}\label{(10)}
	r_{n}^h=B_h\log_2\left(1+\frac{p_h g_{n}^hL_sL_l}{k_BT_0B_h}\right),\forall n\in \mathcal{N},
\end{equation}
where $p_h$ and $g_{n}^h$ denote the transmission power and antenna power gain between UAV $n$ and HAP $h$, respectively. $d_{n,h}$ is the distance between UAV $n$ and HAP $h$. Moreover, $L_s=\left(\frac{v_c}{4\pi d_{n,h}f_c}\right)^2$ is the free space path loss. $L_l$ is the total line loss. $k_B$ is the Boltzmann's constant. $T_0$ is the system noise temperature. $B_h$ denotes the bandwidth. $f_c$ represents the center frequency. $v_c$ is the speed of light. Therefore, the transmission latency and energy consumption for task $m$ from UAV $n$ to HAP $h$ are calculated as:
\begin{equation}\label{(11)}
	t^h_{m,n}=\frac{\lambda_m^nL_m}{r_{n}^h},\forall m\in \mathcal{M},\forall n\in \mathcal{N},
\end{equation}
and
\begin{equation}
	E^h_{m,n}=p_ht_{m,n}^h,\forall m\in \mathcal{M},\forall n\in \mathcal{N},
\end{equation}
respectively. Since the backhaul data is much smaller than uplink data, the backhaul delay is ignored \cite{Cooperative-Liu}.

\subsection{Computation  Model}\label{subsection-3} 
\subsubsection{UAV-based Computation Model}
	Let $f_{m}$ represent the CPU frequency allocated to task $m$. Recall that $L_m$ denotes the data size and $c_m$ is the required number of CPU cycles to compute 1bit data. As a result, the computation latency for processing task $m$ is
	\begin{equation}
		t_m^{cu}=\frac{c_mL_m}{f_{m}},\forall m\in \mathcal{M}.\label{temp1}
	\end{equation}
	Based on \cite{UAVAssisted-Liu}, the energy consumption for handling task $m$ is
	\begin{equation}
		E_m^{cu}=\sum_{n=1}^N(\delta_m^n-\lambda_m^n)\varepsilon_n c_mL_mf_{m}^{2},\forall m\in \mathcal{M},
	\end{equation}
	where $\varepsilon_n$ is the effective switched capacitance related to the architecture of MEC servers on UAVs. Note that the CPU frequency of the MEC server is constrained:
	\begin{equation}\label{c5}
		\sum_{m=1}^M (\delta_m^n-\lambda_m^n)f_{m}\leq F_{max}^n,\forall n\in \mathcal{N},
	\end{equation}
	where $F_{max}^n$ is denoted as the maximum CPU cycle frequency of the UAV $n$.

\subsubsection{HAP-based Computing Model}
	Recall that binary variable $\lambda_m^n$ represents whether task from GU $m$ is computed at the HAP. Then, in the HAP based computation model, the computing delay and energy consumption for handling task $m$ are
	\begin{equation}
		t_m^{ch}=\frac{c_mL_m}{f_{m}},\forall m\in \mathcal{M},
	\end{equation}
	and
	\begin{equation}
		E_m^{ch}=\sum_{n=1}^N\lambda_m^n\varepsilon_h f_{m}^{2}c_mL_m,\forall m\in \mathcal{M},
	\end{equation}
	respectively, in which $\varepsilon_h$ is the energy consumption coefficient related to the specific chip structure of an MEC server \cite{PDDQNLP-Lin}. Let $F_{max}^h$ denote the maximum computational rate of HAP, the CPU frequency constraint for MEC server of the HAP is
	\begin{equation}\label{c6}
		\sum_{m=1}^M \sum_{n=1}^N \lambda_m^n f_{m}\leq F_{max}^h.
	\end{equation}

Based on the above discussion, the total delay for computing task $m$ is related to the transmission and computation, i.e.,
\begin{equation}
	\begin{split}
	t_m^{total}=&t_{m}^u+\sum_{n=1}^N(\delta_m^n-\lambda_m^n)t_m^{cu}+\sum_{n=1}^N\lambda_m^nt_{m,n}^h\\
	&+\sum_{n=1}^N\lambda_m^nt_{m}^{ch},m\in\mathcal{M}.
	\end{split}
\end{equation}
Moreover, since UAVs are hovering in the air after deployment, the energy consumption for hovering is constant. Therefore, the remaining energy consumption for UAV $n$ is for transmission and computation \cite{Computation_Liu}, i.e.,
\begin{equation}
	E_{n}^{total}=\sum_{m=1}^M E_{m,n}^h+\sum_{m=1}^M E_m^{cu},\forall n\in\mathcal{N}.
\end{equation}
Besides, since the HAP hovers at the quasi-position, the remaining energy consumption of the HAP is for computation:
\begin{equation}
	E_{h}^{total}=\sum_{m=1}^ME_m^{ch}.
\end{equation}
\subsection{Problem Formulation}\label{subsection-4}

To deal with the potential uncertainties without distribution information, we formulate $\textbf{P0}$ with the chance constraints to minimize the total energy cost of the aerial MEC platforms, with restrictions of UAV deployment, task offloading, and resource limitation, i.e.,
\begin{subequations}
	\begin{align}
		\textbf{P0:} \quad &\min_{\mathbf{v},\bm{\delta},\bm{\lambda},\mathbf{f}} \sum_{n=1}^N E_n^{total}+E_h^{total}\nonumber\\
		\textrm{s.t.}\quad&\mathbf{Pr}\left\{t_m^{total}\leq T_m^{max}\right\}\geq\alpha_m,\forall m\in \mathcal{M},\label{c1}\\
		&\lambda_m^n\leq\delta_m^n,\forall m\in \mathcal{M}, n\in \mathcal{N},\label{c2}\\
		&E_{n}^{total}\leq E_n^{max},\forall n\in \mathcal{N},\label{c3}\\
		&E_{h}^{total}\leq E_h^{max},\label{c4}\\
		&v_n\in \biggl \{(x_n, y_n) \Bigg|\begin{aligned}X^{\min} \leq x_n \leq X^{\max}\\Y^{\min} \leq y_n \leq Y^{\max}\end{aligned} \biggl \},\forall n\in\mathcal{N},\label{c12}\\
		&\delta_m^n\in\{0,1\},\forall m\in \mathcal{M}, n\in \mathcal{N},\label{c9}\\
		&\lambda_m^n\in\{0,1\},\forall m\in \mathcal{M}, n\in \mathcal{N},\label{c10}\\
		&f_m\geq 0,\forall m\in \mathcal{M},\label{c11}\\
		&(\ref{c7}),(\ref{c8}),(\ref{c5}),(\ref{c6}),\nonumber
	\end{align}
\end{subequations}
wherein $\mathbf{v}=\{v_n|\forall n\}$ is the UAV deployment positions, $\bm{\delta}=\{\delta_m^n|\forall m,\forall n\}$ represents GU-UAV connection relationships, $\bm{\lambda}=\{\lambda_m^n|\forall m,\forall n\}$ denotes the task offloading indicators and $\mathbf{f}=\{f_{m}|\forall m\}$ is the resource allocation schemes. With respect to the uncertain CSI estimation errors, (\ref{c1}) is the chance constraint under the uncertainty set $\mathcal{P}$, which indicates that the total latency for processing task $m$ should not be larger than $T_m^{max}$ with a probability of $\alpha_m$. Constraint (\ref{c2}) is the data flow conservation, reflecting the inherent relationship between $\lambda_m^n$ and $\delta_m^n$. Constraints (\ref{c3}) and (\ref{c4}) indicate the total energy consumption of UAV and HAP should not be larger than the maximum capacities $E_n^{max}$ and $E_h^{max}$, respectively. (\ref{c12}) is the constraint for the deployment range of UAVs, wherein $[X^{min},X^{max}]$ and $[Y^{min},Y^{max}]$ are the horizontal and vertical bounds of the area, respectively.

It is observed that $\textbf{P0}$ is related with the random parameter $\Delta_m$ under uncertainty set $\mathcal{P}$ without distribution information. Besides, $\textbf{P0}$ is an MINLP concerning binary variables $\bm{\delta}$ and $\bm{\lambda}$, and continuous variables $\mathbf{f}$ and $\mathbf{v}$, and the time complexity is exponential with the problem scale growing. Therefore, solving $\textbf{P0}$ with efficiency is  intractable. 

\section{Algorithm Design}\label{sec4}

To tackle $\textbf{P0}$ efficiently, we divide the process into two phases of UAV deployment and computation offloading. For clarity, the overview of the designed algorithms is illustrated in Fig. \ref{algorithm}. As for the UAV deployment, we design a WKD based algorithm in Section \ref{subsection-5} to obtain UAV positions $\mathbf{v}$ and GU-UAV connections $\bm{\delta}$ with a low time complexity. Then, based on the determined pre-deployment of UAVs to handle the CSI estimation error, the CVaR-based mechanism is proposed in Section \ref{subsection-6}. Thus, problem $\textbf{P1}$ with the chance constraint (\ref{c1}) is conservatively approximated and reformulated into $\textbf{P2}$ via the DRO and CVaR based mechanism. In Section \ref{subsection-7}, the reformulated problem $\textbf{P2}$ is dealt with by the primal decomposition. $\textbf{P3}$ is in the form of SOCP and can be solved via CVX. Furthermore, to effectively obtain the integer offloading strategies of the subproblem $\textbf{P4}$, it is reformulated into $\textbf{P5}$, and the BWOA is designed in Section \ref{subsection-8}.

\begin{figure}
	\centering{\includegraphics[width=0.95\linewidth]{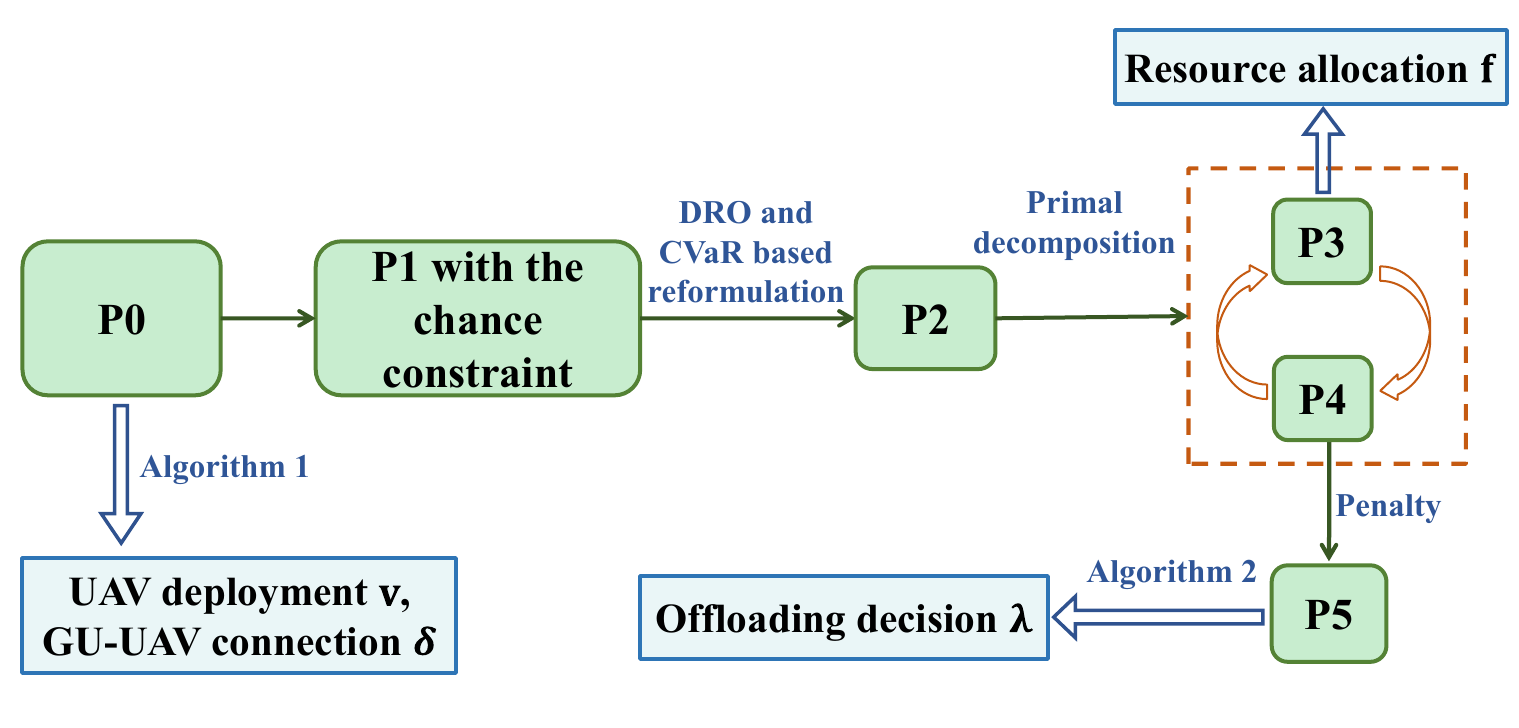}}
	\caption{Overview of the designed algorithms.}
	\label{algorithm}
\end{figure}

\subsection{UAV Deployment Optimization}\label{subsection-5}

Generally, to reduce the G2U transmission delay, UAVs should be deployed closer to GUs. Moreover, considering the various tasks with different inherent characteristics including data size $L_m$, computation complexity $c_m$, and the maximum tolerable delay $T_m^{max}$, if the UAV is deployed closer to GUs with larger load, the latency is further reduced to better satisfy the QoS. Hence, we design the WKD mechanism to obtain the deployment position $\mathbf{v}$ of UAVs and the GU-UAV connections $\bm{\delta}$, which highlights the importance of time-sensitive tasks and provides a more practical and efficient solution for the pre-deployment of UAVs. Since the potential uncertainties have a relatively small impact, they are ignored during the pre-deployment operations. The detailed WKD algorithm is provided in Algorithm \ref{alg1:K-means}.

Firstly, we select $N$ points in the area as initial positions for UAVs. Then, the distance between UAVs and GUs is obtained according to (\ref{equation1}) and all GUs are accordingly assigned to their nearest clusters (line \ref{line-4}). Then, the center point of each cluster is recalculated and the UAV positions are updated as (line \ref{line-6}):
\begin{equation}\label{equation47}
	v_n = \frac{\sum\limits_{m\in\mathcal{U}_n} \iota_m \mathbf{w}_m}{\sum\limits_{m\in\mathcal{U}_n}\iota_m},\forall n \in \mathcal{N},
\end{equation}
where the weight coefficient $\iota_m$ of the task from GU $m$ is obtained via $\iota_m = \varsigma_1 L_m + \varsigma_2 c_m + \frac{1-\varsigma_1 -\varsigma_2}{T_m^{max}}$. Specifically, $\varsigma_1$ and $\varsigma_2$ are weighted variables for a tradeoff among $L_m$, $c_m$ and $T_m^{max}$. $\mathcal{U}_n$ denotes the set of GUs belonging to cluster $n$. Then, this process is repeated until the result converges and $I_1^{max}$ is denoted as the number of iterations. After $\mathbf{v}$ is obtained, GUs are accordingly connected with their corresponding UAVs (line \ref{line-12}). During each iteration, the distances between GUs and UAVs are calculated and the positions of UAVs are updated towards convergence \cite{K-Means-Selim}. Moreover, since the time complexity of Algorithm \ref{alg1:K-means} is related to the scale of GU $M$ and UAV $N$, the corresponding time complexity is $\mathcal{O} (MNI^{max}_1)$. During the clustering process, each user is assigned to a cluster and the UAVs are deployed at the weighted cluster centers. Leveraging the proposed WKD algorithm, which is operated based on the distribution of GUs, we obtain the pre-deployment for $N$ UAVs to cover all the clusters as the initial positions for the subsequent operations.

\subsection{CVaR-based Mechanism for Chance Constraint}\label{subsection-6}

\begin{algorithm}[t]
	\caption{Weighted K-means Based Multi-UAV Deployment}\label{alg1:K-means}
	\begin{algorithmic}[1]
	\REQUIRE Locations of GUs $\mathbf{w}_m$.
	\STATE \textit{Initialization:} Set initial $\mathbf{v}$, $\delta_m^n=0,\forall m\in\mathcal{M},\forall n\in\mathcal{N}$.
	\REPEAT
		\FOR{$n\in\mathcal{N}$}
			\STATE {Calculate the distance $d_{m,n}$ between GU $m$ and UAV $n$ based on (\ref{equation1}).\label{line-4}}
			\STATE {Assign GU $m$ to its nearest UAV $n^*$.\label{line-5}}
			\STATE {Update $v_n$ based on (\ref{equation47}).\label{line-6}}
		\ENDFOR
	\UNTIL{the result converges.}
	\FOR{$n\in\mathcal{N}$}
		\FOR{$m\in\mathcal{U}_n$}
			\STATE {Connect GU $m$ with UAV $n$, i.e., $\delta_m^n=1$.\label{line-12}}
		\ENDFOR
	\ENDFOR
	\ENSURE UAV deployment location $\mathbf{v}$ and the GU-UAV connection $\bm{\delta}$.
	\end{algorithmic}
\end{algorithm}

From Algorithm \ref{alg1:K-means}, we obtain the UAV deployment strategy $\mathbf{v}$ and the GU-UAV connection $\bm{\delta}$. Thus, the original problem $\textbf{P0}$ turns into $\textbf{P1}$, which is only related with task offloading decision $\bm{\lambda}$ and resource allocation $\mathbf{f}$:
\begin{equation}
	\begin{split}
		\textbf{P1:}\quad&\min_{\bm{\lambda},\mathbf{f}}\quad \sum_{n=1}^N E_n^{total}+E_h^{total}\\
		&\textrm{s.t.}\quad(\ref{c1})-(\ref{c4}),(\ref{c8}),(\ref{c5}),(\ref{c6}),(\ref{c10}),(\ref{c11}).
	\end{split}
\end{equation}
Note that (\ref{c1}) is the chance constraint, and to deal with it without distribution information and obtain a conservative solution for problem $\textbf{P1}$, we employ DRO to transform (\ref{c1}) into a distributionally robust chance constraint (DRCC) with uncertainty set $\mathcal{P}$. In detail, let $\underset{\mathbb{P}\in\mathcal{P}}{inf}$ denote the lower bound of possibility for all potential distributions, aiming to seek the solution under the worst case. Then, the chance constraint is reformulated as
\begin{equation}
	\underset{\mathbb{P}\in\mathcal{P}}{inf}\quad\mathbf{Pr}_\mathbb{P}\left\{t_m^{total}\leq T_m^{max}\right\}\geq\alpha_m,\forall m\in \mathcal{M}, \label{51}
\end{equation}
which is still complicated due to the random parameter $\Delta_m$ under the uncertainty set $\mathcal{P}$. 

Accordingly, we leverage CVaR mechanism to obtain a conservative estimation for the resource allocation and offloading strategy, which can efficiently improve the reliability while reducing the energy consumption. Generally, CVaR is an indicator to evaluate the risk quantification. It is defined as the conditional expectation value of loss that exceeds a certain probability level under a given probability distribution \cite{Value-Sarykalin}, \cite{Optimization-Tyrrell}. The inherent relationship between the loss function $\phi(\xi)$ for random parameter $\xi$ and CVaR under safety factor $\alpha$ is
\begin{equation}
	\mathbb{P}\left\{\phi(\xi)\leq\mathbb{P}-CVaR_\alpha(\phi(\xi))\right\}\geq\alpha.\label{equation30}
\end{equation}
Then, the CVaR constraint in (\ref{equation30}) can constitute a conservative approximation for the DRCC, i.e.,
\begin{equation}
	\begin{split}
	\underset{\mathbb{P}\in \mathcal{P}}{sup}\quad \mathbb{P}-CVaR_{\alpha}\left( \phi \left( \xi \right) \right) \leq 0,\forall \mathbb{P}\in \mathcal{P}\\
	\Leftrightarrow \underset{\mathbb{P}\in \mathcal{P}}{inf}\quad \mathbb{P}\left\{\phi \left( \xi \right) \leq 0\right\}\geq \alpha ,
	\end{split}
\end{equation}
where $\underset{\mathbb{P}\in \mathcal{P}}{sup}$ is the upper bound under distribution $\mathbb{P}$ \cite{Distributionally-Cui}, \cite{Distributionally-Ling}. Moreover, referring \cite{Distributionally-Ding}, we obtain \textbf{Lemma} \textbf{\ref{lemma1}}.

\begin{lemma}
	For $\Theta\in\mathbb{R}$ and $\theta^0\in\mathbb{R}$, if the loss function is $\phi(\xi)=\Theta\xi+\theta^0$, the worst-case CVaR $\underset{\mathbb{P}\in\mathcal{P}}{sup}\quad\mathbb{P}-CVaR_{\alpha}(\phi(\xi))$ can be derived as a second order cone programming, i.e.,\label{lemma1}
	\begin{equation}
		\begin{split}
			\underset{\beta,e,q,z,s}{inf}&\beta +\frac{1}{1-\alpha}\left( e+s \right) ,\\
			&e-\theta ^0+\beta +q-\Theta \mu -z>0,\\
			&e\geq 0,z> 0,\\
			&\begin{Vmatrix}
					q\\
					\Theta\sigma\\
					z-s
			\end{Vmatrix} \leq z+s,\\
		\end{split}
	\end{equation}
in which $\beta$, $e$, $q$, $z$, and $s$ are auxiliary variables. $\mu$ and $\sigma$ are the mean and standard deviation of random parameter $\xi$, respectively.   
\end{lemma}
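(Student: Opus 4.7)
The plan is to reformulate the worst-case CVaR by unrolling the Rockafellar--Uryasev representation, swapping an infimum with a supremum, and then using conic duality on the resulting moment problem. First I would recall that for any fixed distribution $\mathbb{P}$,
\begin{equation*}
\mathbb{P}\text{-}CVaR_{\alpha}(\phi(\xi)) = \inf_{\beta\in\mathbb{R}} \left\{ \beta + \frac{1}{1-\alpha}\, \mathbb{E}_{\mathbb{P}}\!\left[(\phi(\xi)-\beta)^{+}\right]\right\},
\end{equation*}
and substitute $\phi(\xi)=\Theta\xi+\theta^{0}$. Taking the supremum over $\mathbb{P}\in\mathcal{P}$ and justifying a minimax exchange (the inner objective is convex in $\beta$ and linear in the probability measure, while $\mathcal{P}$ is weakly compact under the two moment conditions) lets me pull the $\inf_\beta$ outside, reducing the task to bounding $\sup_{\mathbb{P}\in\mathcal{P}} \mathbb{E}_{\mathbb{P}}[(\Theta\xi+\theta^{0}-\beta)^{+}]$ for fixed $\beta$.

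Next I would treat the inner worst-case expectation as a generalized moment problem: maximize $\int (\Theta\xi+\theta^{0}-\beta)^{+}\,d\mathbb{P}(\xi)$ subject to $\int d\mathbb{P}=1$, $\int \xi\,d\mathbb{P}=\mu$, and $\int\xi^{2}\,d\mathbb{P}=\mu^{2}+\sigma^{2}$. Writing the Lagrangian and invoking conic/LP duality for moment problems (strong duality holds because the moment vector lies in the interior of the feasible moment cone), the dual is to minimize $y_{0}+y_{1}\mu+y_{2}(\mu^{2}+\sigma^{2})$ over $(y_{0},y_{1},y_{2})$ such that $y_{0}+y_{1}\xi+y_{2}\xi^{2} \ge \max\{0,\,\Theta\xi+\theta^{0}-\beta\}$ for all $\xi\in\mathbb{R}$. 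Both the pointwise nonnegativity of a quadratic in $\xi$ and the pointwise quadratic majorization of the affine piece reduce, via the discriminant condition, to linear matrix inequalities of size two, which are equivalent to rotated second-order cone constraints.

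Then I would introduce the auxiliary variables $e,q,z,s$ that linearize the resulting SOC conditions in the form shown in the statement, and add $\beta$ back from the outer infimum, combining the outer $\beta$ with the coefficient $1/(1-\alpha)$ of the worst-case expectation. A direct bookkeeping check shows that the quadratic majorization constraint $y_{0}+y_{1}\xi+y_{2}\xi^{2}\ge \Theta\xi+\theta^{0}-\beta$ becomes the SOC cut $\|(q,\Theta\sigma,z-s)\|\le z+s$ together with the linear inequality $e-\theta^{0}+\beta+q-\Theta\mu-z>0$, while $y_{0}+y_{1}\xi+y_{2}\xi^{2}\ge 0$ yields $e\ge 0$ and $z>0$; the objective $\beta+\frac{1}{1-\alpha}(e+s)$ then matches the dual value.

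The main obstacle I anticipate is the conic-duality step on the moment problem: I need strong duality to hold (so that the dual value equals the primal worst-case expectation), and I need to match the explicit dual variables to exactly the $(e,q,z,s)$ layout the lemma advertises. Verifying a Slater-type condition for the moment set (guaranteed by $\sigma>0$) handles the first concern, and an algebraic change of variables that splits the dual quadratic into a squared-norm part and a rotated SOC cone handles the second. Once those are in place, collapsing the outer $\inf_\beta$ with the dual minimization yields a single SOCP in $(\beta,e,q,z,s)$, completing the proof.
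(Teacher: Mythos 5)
Your proposal is correct and follows essentially the same route as the paper's (the paper defers the details to its appendix and to the cited DRO literature, where this lemma is derived exactly as you describe): the Rockafellar--Uryasev representation of CVaR, exchange of $\inf_\beta$ and $\sup_{\mathbb{P}\in\mathcal{P}}$, strong duality for the inner worst-case expectation viewed as a moment problem with mean and second-moment constraints, and reduction of the two resulting $2\times 2$ positive-semidefiniteness (discriminant) conditions to rotated second-order cone constraints that are then relabeled as $(e,q,z,s)$. The only soft spots are minor: the minimax exchange is better justified by a stochastic saddle-point theorem for moment problems than by weak compactness of $\mathcal{P}$ (equality moment constraints give tightness, not closedness in the weak topology), and the final identification of the dual variables with the advertised $(e,q,z,s)$ layout is asserted rather than executed, but both are routine.
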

\begin{proof}
	The detailed proof is in Appendix A.
\end{proof}
Hence, the DRCC in (\ref{51}) can be approximated by a conservative and convex programming problem \cite{Distributionally-Zymler}. Specifically, the complete expression for $t_m^{total}$ is
\begin{equation}
	\begin{split}
		t_m^{total}&=\frac{L_m}{B_u\log_2\left(1+\frac{p_u(\bar{g}_{m}^u+\Delta_{m})}{n_0B_u}\right)}\\
		&+\sum_{n=1}^N\frac{\lambda_m^nL_m}{B_h\log_2\left(1+\frac{p_h g_{n}^hL_sL_l}{k_BT_0B_h}\right)}+\frac{c_mL_m}{f_{m}}.	
	\end{split}
\end{equation}
Since the estimation error $\Delta_m$ is much smaller than the theoretical value of channel gain \cite{Energy-Wu}, we adopt the first-order Taylor expansion to approximate the latency $t_m^{total}$, i.e.,
\begin{equation}\label{equation27}
	\begin{split}
	t_m^{total}\approx &\frac{L_m}{B_u\log_2\left(1+\frac{p_u\bar{g}_{m}^u}{n_0B_u}\right)}+\frac{c_mL_m}{f_{m}}\\
	&+\sum_{n=1}^N\frac{\lambda_m^nL_m}{B_h\log_2\left(1+\frac{p_h g_{n}^hL_sL_l}{k_BT_0B_h}\right)}\\
	&-\frac{L_m\ln 2}{B_u}\frac{p_u\Delta_{m}}{\left(n_0B_u+p_u\bar{g}_{m}^u\right)\ln^2 \left(1+\frac{p_u\bar{g}_{m}^u}{n_0B_u}\right)}.
	\end{split}
\end{equation}
Consequently, the DRCC in (\ref{51}) is reformulated into
\begin{equation}\label{equation28}
	\underset{\mathbb{P}\in\mathcal{P}}{inf}\quad\mathbf{Pr}_\mathbb{P}\left\{\Theta_m\Delta_m+\theta^0_m\leq 0\right\}\geq\alpha_m,\forall m\in \mathcal{M},
\end{equation}
where
\begin{equation}
	\Theta_m=-\frac{L_mf_m\ln 2}{B_u}\frac{p_u}{(n_0B_u+p_u\bar{g}_{m}^u)\ln^2 \left(1+\frac{p_u\bar{g}_{m}^u}{n_0B_u}\right)},
\end{equation}
and
\begin{equation}
	\begin{split}
		\theta^0_m=&\frac{L_mf_m}{B_u\log_2\left(1+\frac{p_u\bar{g}_{m}^u}{n_0B_u}\right)}+c_mL_m\\
		&+\sum_{n=1}^N\frac{\lambda_m^nL_mf_m}{B_h\log_2\left(1+\frac{p_h g_{n}^hL_sL_l}{k_BT_0B_h}\right)}-T_{m}^{max}f_m.	
	\end{split}
\end{equation}
Therefore, according to \textbf{Lemma} \textbf{\ref{lemma1}}, the DRCC in (\ref{51}) with random parameter $\Delta_m$ is reformulated into an MISOCP, i.e.,
\begin{equation}\label{56}
	\begin{split}
		\underset{\beta_m,e_m,q_m,z_m,s_m}{inf}&\beta_m +\frac{1}{1-\alpha_m}\left( e_m+s_m \right)\leq 0,\\
		&e_m-\theta^0_m+\beta_m+q_m-\Theta_m \mu_{m} -z_m>0,\\
		&e_m\geq 0,z_m> 0,\\
		&\begin{Vmatrix}
				q_m\\
				\Theta_m\sigma_{m}\\
				z_m-s_m
		\end{Vmatrix} \leq z_m+s_m,\\
	\end{split}
\end{equation}
where $\beta_m$, $e_m$, $q_m$, $z_m$, and $s_m$ are all auxiliary variables.
Recall that $\mu_{m}$ is the mean value of CSI estimation error $\Delta_m$, and $\sigma_{m}$ is the corresponding standard deviation. Thus, via CVaR, $\textbf{P1}$ is reformulated as
\begin{equation}
	\begin{split}
		\textbf{P2:}\quad&\min_{\bm{\lambda},\mathbf{f},\bm{\beta},\mathbf{e},\mathbf{q},\mathbf{z},\mathbf{s}} \sum_{n=1}^N E_n^{total}+E_h^{total}\\
		&\textrm{s.t.}\quad(\ref{c2})-(\ref{c4}),(\ref{c8}),(\ref{c5}),(\ref{c6}),(\ref{c10}),(\ref{c11}),(\ref{56}),
	\end{split}
\end{equation}
with the MISOCP constraint in (\ref{56}) under the worst-case scenario, and a conservative solution can be obtained to enhance the robustness against the fluctuations. $\bm{\beta}$, $\mathbf{e}$, $\mathbf{q}$, $\mathbf{z}$ and $\mathbf{s}$ are the corresponding vectors of $\beta_m$, $e_m$, $q_m$, $z_m$ and $s_m$, respectively. However, it is still an MIP problem and complicated to deal with both the binary variables and continuous variables.
\subsection{Primal Decomposition for P2}\label{subsection-7}
It is noting that the constraints of problem $\textbf{P2}$ can be divided into constraints $(\ref{c3})$, $(\ref{c4})$, $(\ref{c11})$, $(\ref{c5})$, $(\ref{c6})$, $(\ref{56})$ with respect to variable $\mathbf{f}$, $\bm{\beta}$, $\mathbf{e}$, $\mathbf{q}$, $\mathbf{z}$ and $\mathbf{s}$, as well as constraints $(\ref{c2})-(\ref{c4})$, $(\ref{c10})$, $(\ref{c8})$, $(\ref{c5})$, $(\ref{c6})$, $(\ref{56})$ related with the binary offloading strategies $\bm{\lambda}$ \cite{Tutorial-Palomar}. Specifically, when $\bm{\lambda}$ is fixed, the subproblem \textbf{P3} related to $\mathbf{f}$, $\bm{\beta}$, $\mathbf{e}$, $\mathbf{q}$, $\mathbf{z}$ and $\mathbf{s}$ is accordingly obtained:
\begin{equation}
	\begin{split}
		\textbf{P3:}\quad&\min_{\mathbf{f},\bm{\beta},\mathbf{e},\mathbf{q},\mathbf{z},\mathbf{s}} \sum_{n=1}^N E_n^{total}+E_h^{total}\\
		&\textrm{s.t.}\quad(\ref{c3}),(\ref{c4}),(\ref{c11}),(\ref{c5}),(\ref{c6}),(\ref{56}).
	\end{split}
\end{equation}
In the form of SOCP, $\textbf{P3}$ can be solved by a standard convex optimization toolkit such as CVX.

With the value of $\mathbf{f}$, $\bm{\beta}$, $\mathbf{e}$, $\mathbf{q}$, $\mathbf{z}$ and $\mathbf{s}$, the offloading decision subproblem $\textbf{P4}$ is only related with variable $\bm{\lambda}$:
\begin{equation}
	\begin{split}
		\textbf{P4:}\quad&\min_{\bm{\lambda}}\sum_{n=1}^N E_n^{total}+E_h^{total}\\
		&\textrm{s.t.}\quad(\ref{c2})-(\ref{c4}),(\ref{c10}),(\ref{c8}),(\ref{c5}),(\ref{c6}),(\ref{56}).		
	\end{split}
\end{equation}
As a result, $\textbf{P2}$ can be handled by iteratively solving $\textbf{P3}$ and $\textbf{P4}$. However, $\textbf{P4}$ is still intractable to directly solve due to the binary variables.

\subsection{BWOA for P4}\label{subsection-8}
As for $\textbf{P4}$, the exhaustive search can obtain the optimal solutions. However, as the scale of problem increases, it faces exponential complexity. Hence, we design a meta-heuristic BWOA for efficient solutions, in which each searching agent represents a potential solution for the binary problem. However, since the original BWOA is designed for unconstrained optimization problems, the solution provided by the agent may be not feasible. To address this issue, we adopt the penalty mechanism and reformulate the objective function of $\textbf{P4}$ into $\varGamma (\bm{\lambda})$, which includes both the objective function as well as the penalty value \cite{Whale-Pham}. Specifically, the agents violating constraints are assigned with a higher fitness value under the influence of penalty factors. As such, the constrained problem is effectively transformed without constraints. Based on above discussions, the fitness function related to $\bm{\lambda}$ is defined as:
\begin{equation}\label{fitness-lambda}
	\begin{split}
		\varGamma (\bm{\lambda})&=\sum_{n=1}^N E_n^{total}+E_h^{total}\\
		&+\sum_{m=1}^M\sum_{n=1}^N \vartheta H_{mn,1}(h_{mn,1}(\bm{\lambda}))h_{mn,1}^2(\bm{\lambda})\\
		&+\sum_{n=1}^N \vartheta H_{n,2}(h_{n,2}(\bm{\lambda}))h_{n,2}^2(\bm{\lambda})+\vartheta H_{3}(h_{3}(\bm{\lambda}))h_{3}^2(\bm{\lambda})\\
		&+\sum_{n=1}^N \vartheta H_{n,4}(h_{n,4}(\bm{\lambda}))h_{n,4}^2(\bm{\lambda})+\vartheta H_{5}(h_{5}(\bm{\lambda}))h_{5}^2(\bm{\lambda})\\
		&+\vartheta H_{6}(h_{6}(\bm{\lambda}))h_{6}^2(\bm{\lambda})+\sum_{m=1}^M \vartheta H_{m,7}(h_{m,7}(\bm{\lambda}))h_{m,7}^2(\bm{\lambda}),
	\end{split}
\end{equation}
where the penalty factor $\vartheta$ is set as $10^{5}$. $H(\cdot)$ is an index function. $H(h(\bm{\lambda}))=0$ if $h(\bm{\lambda})\leq 0$, and otherwise $H(h(\bm{\lambda}))=1$. Hence, by introducing the penalty factor and index function, the solution which violates the constraints leads to an increasing fitness. The penalty factors act as a role to prevent agents from searching infeasible solutions during their explorations and determine whether the current solution satisfies the corresponding constraints. $h(\bm{\lambda})$ is defined based on the constraints of $\textbf{P4}$, i.e., 
\begin{equation}
	\begin{cases}
		h_{mn,1}(\bm{\lambda})=\lambda_m^n-\delta_m^n,\\
		h_{n,2}(\bm{\lambda})=\sum\limits_{m=1}^M\lambda_m^nE_m^h+\sum\limits_{m=1}^M(\delta_m^n-\lambda_m^n)E_m^{cu}-E_n^{max},\\
		h_{3}(\bm{\lambda})=\sum\limits_{n=1}^N\sum\limits_{m=1}^M\lambda_m^nE_m^{ch}-E_h^{max},\\
		h_{n,4}(\bm{\lambda})=\sum\limits_{m=1}^M(\delta_m^n-\lambda^n_m)f_m-F_{max}^n,\\
		h_{5}(\bm{\lambda})=\sum\limits_{m=1}^M \sum\limits_{n=1}^N \lambda_m^n f_{m}-F_{max}^h,\\
		h_{6}(\bm{\lambda})=\sum\limits_{m=1}^M\sum\limits_{n=1}^N\lambda_m^n-H,\\
		h_{m,7}(\bm{\lambda})=-e_m+\theta^0_m-\beta_m-q_m+\Theta_m \mu_{m}+z_m.\\
	\end{cases}
\end{equation}
Consequently, the objective function of BWOA to tackle $\textbf{P4}$ is further transformed as
\begin{equation}\label{fitness}
	\begin{split}
		\textbf{P5:}\quad&\min_{\bm{\lambda}}\varGamma (\bm{\lambda})\\
		\textrm{s.t.}&\quad(\ref{c10}).
	\end{split}
\end{equation}
Based on the objective function of $\textbf{P5}$, BWOA is further enhanced to search for quasi-optimal solutions, which is a swarm-based technology in light of hunting behaviors of whales \cite{Binary-Eid}, \cite{Binary-Kumar}. The positions of agents are updated iteratively based on the social behaviors of whales including exploration and exploitation \cite{Whale-Mirjalili}. The quality of each solution is evaluated by calculating the fitness function $\varGamma(\bm{\lambda})$ in (\ref{fitness-lambda}). The position of each agent $X(i_2)$ during iteration $i_2$ represents a potential solution for $\bm{\lambda}$. Specifically, the procedures of BWOA include encircling prey, spiral updating, and searching for prey, detailed as follows.
\subsubsection{Encircling Prey}For the agent encircling prey, it might update its current position linearly toward the current optimal solution $\vec{X}^*(i_2)$. The position of the agent in the next integration is
\begin{equation}\label{update-X1}
	\vec{X}(i_2+1)=\begin{cases}
		\complement(\vec{X}(i_2)),\quad&\text{if}\quad P_{BWOA}<\tau_{ep},\\
		\vec{X}(i_2),\quad&\text{if}\quad P_{BWOA}\geq\tau_{ep},
	\end{cases}
\end{equation}
where $\tau_{ep}$ is the step size, which is a possibility to determine whether there is a switch (from $0$ to $1$ or from $1$ to $0$) between the current bit value and the value in the next iteration. Specifically,
\begin{equation}\label{update-ep}
	\tau_{ep}=\frac{1}{1+\exp \left( -10\left( \vec{A}\cdot \vec{D}-0.5 \right) \right)},
\end{equation}
where 
\begin{equation}\label{update-A}
	\vec{A}=2\vec{a}\cdot\vec{r_1}-\vec{a},
\end{equation}
and
\begin{equation}\label{update-C}
	\vec{C}=2\cdot\vec{r_2},
\end{equation}
are coefficient factors. The operation $\cdot$ indicates element-wise multiplication \cite{Whale-Pham}. $\vec{a}$ linearly decreases from $2$ to $0$ during the iteration:
\begin{equation}\label{update-a}
	a=2-i_2\times\frac{2}{I_2^{max}},
\end{equation}
where $i_2$ is the index of current iteration, $I_2^{max}$ is the maximum number of iterations. $\vec{r_1}$ and $\vec{r_2}$ are random vectors within $[0,1]$. $\complement(\cdot)$ represents the complement operation. $P_{BWOA}\in [0,1]$ is a basis for action selection in the mechanism of encircling prey, and the distance vector $\vec{D}$ is calculated by
\begin{equation}\label{update-D}
	\vec{D}=|\vec{C}\cdot \vec{X}^*\left( i_2 \right) -\vec{X}\left( i_2 \right) |.
\end{equation}
\subsubsection{Spiral Updating}The agent tends to approach the current optimal individual in either encircling prey or spiral manner. In detail, the spiral updating mechanism for position updating in BWOA is
\begin{equation}\label{update-Xsu}
	\vec{X}(i_2+1)=\begin{cases}
		\complement(\vec{X}(i_2)),\quad&\text{if}\quad P_{BWOA}<\tau_{su},\\
		\vec{X}(i_2),\quad&\text{if}\quad P_{BWOA}\geq\tau_{su}.
	\end{cases}
\end{equation}
Moreover, the step size $\tau_{su}$ is calculated as
\begin{equation}\label{update-su}
	\tau _{su}=\frac{1}{1+\exp \left( -10\left( \vec{A}\cdot \vec{D^\prime}-0.5 \right) \right)},
\end{equation}
where $\vec{A}$ is updated based on (\ref{update-A}), and $\vec{D^\prime}$ is updated as
\begin{equation}\label{update-D'}
	\vec{D^\prime}=|\vec{X}^*(i_2)-\vec{X}(i_2)|.
\end{equation}
\begin{algorithm}[t]
	\caption{BWOA for the Offloading Decision\label{alg2:BWOA}}
	\begin{algorithmic}[1]
	\STATE \textit{Initialization:} Set the iteration index $i_2=1$, maximum number of iteration $I_2^{max}$ and the agent population $X_{k}$, $k={1,2,\ldots,K}$.
	\STATE {Calculate the fitness value of the search agents and obtain the best search agent $\vec{X}^*(i_2)$.}
	\REPEAT
		\FOR{$k\in\{1,2,\cdots,K\}$}
			\STATE {Update $A$, $C$, and $a$ according to (\ref{update-A}), (\ref{update-C}) and (\ref{update-a}), respectively.\label{line5}}
			\STATE {Generate the parameter $P_{rand}\in[0,1]$ randomly.}
			\IF{$P_{rand}\geq 0.5$}
				\STATE {Update $\vec{D}$ by (\ref{update-D'}) and $\tau_{su}$ by (\ref{update-su}).\label{line7}}
				\STATE {Update the position $\vec{X}(i_2)$ by (\ref{update-Xsu}).\label{line8}}
			\ELSE
				\IF {$|A|\geq 1$}
					\STATE {Select a random agent $\vec{X}_{rand}$ and update $\vec{D}$ based on (\ref{update-D''}).\label{line11}}
					\STATE {Update $\tau_{sp}$ via (\ref{update-sp}) and $\vec{X}(i_2)$ via (\ref{update-Xsp}).\label{line12}}
				\ELSE
					\STATE {Update $\vec{D}$ via (\ref{update-D}) and $\tau_{ep}$ via (\ref{update-ep}).\label{line14}}
					\STATE {Update the positions of agents $\vec{X}(i_2)$ via (\ref{update-X1}).\label{line15}}
				\ENDIF
			\ENDIF
		\ENDFOR
		\STATE {Calculate the fitness value of each agent via (\ref{fitness-lambda}) and obtain the best position $\vec{X}^*(i_2)$ of the agents.\label{line19}}
		\STATE {Update the iteration index $i_2=i_2+1$.}
	\UNTIL {$i_2> I_2^{max}$}
	\ENSURE{The best fitness value $\varGamma$ and offloading decision $\bm{\lambda}$.}
	\end{algorithmic}
\end{algorithm}
\subsubsection{Search for Prey}To achieve more exploration and avoid falling into local optima, some agents conduct random searches instead of updating toward the current optimal. This mechanism is called $\textit{search for prey}$ in BWOA. Since the exploration may bring agents to deviate from the current optima, it enhances the global search capacities of the agents. The updating rule for the positions of agents is
\begin{equation}\label{update-Xsp}
	\vec{X}(i_2+1)=\begin{cases}
		\complement(\vec{X}(i_2)),\quad&\text{if}\quad P_{BWOA}<\tau_{sp},\\
		\vec{X}(i_2),\quad&\text{if}\quad P_{BWOA}\geq\tau_{sp},
	\end{cases}
\end{equation}
where
\begin{equation}\label{update-sp}
	\tau_{sp}=\frac{1}{1+\exp \left( -10\left( \vec{A}\cdot \vec{D^{\prime\prime}}-0.5 \right) \right)}.
\end{equation}
Furthermore, $\vec{A}$ is obtained by (\ref{update-A}), and $\vec{D^{\prime\prime}}$ is calculated as
\begin{equation}\label{update-D''}
	\vec{D^{\prime\prime}}=|\vec{C}\cdot \vec{X}_{rand}\left( i_2 \right) -\vec{X}\left( i_2 \right) |,
\end{equation}
where $\vec{X}_{rand}$ denotes the position of a random selected agent.

\begin{table}[!t]
	\renewcommand\arraystretch{1.25}
	\begin{center}
	\caption{PARAMETER SETTING}\label{parameter}
	\begin{tabular}{|c|c||c|c|}
	\hline
	Parameter & Value & Parameter & Value \\
	\hline
	\hline
	$p_u$ & $0.5W$  & $p_h$ & $2W$ \\
	\hline
	$g_0^u$ & $-50dB$ & $n_0$ & $-174dBm/Hz$\\
	\hline
	$B_u$ & $5MHz$ & $B_h$ & $5MHz$\\
	\hline
	$T_0$ & $1000K$ & $k_B$ & $1.38\times 10^{-23}J/K$\\
	\hline
	$f_c$ & $2.4GHz$ & $v_c$ & $3\times 10^8 m/s$\\
	\hline
	$L_l$ & $-23dB$ & $g_m^h$ & $42dB$\\
	\hline
	$c_m$ & $300cycles/bit$ & $T^{max}_m$ & $20s$\\
	\hline
	$H$ & $10$ & $\alpha_m$ & $95\%$\\
	\hline
	$F_{max}^n$ & $8\times 10^{9} Hz$ & $F_{max}^h$ & $4\times 10^{11} Hz$\\
	\hline
	$\varepsilon_n$ & $10^{-27}$ & $\varepsilon_h$ & $10^{-28}$\\
	\hline
	$E^{max}_n$ & $200J$ & $E^{max}_h$ & $20kJ$\\
	\hline
	$\mu_m$ & 0 & $\sigma_m$ & $0.1\bar{g}_{m}^u$\\
	\hline
	$\varsigma_1$ & 0.4 & $\varsigma_2$ & 0.2\\
	\hline
	\end{tabular}
	\end{center}
\end{table}
\subsubsection{Algorithm Design}
The BWOA is provided in Algorithm \ref{alg2:BWOA} for solving $\textbf{P5}$. To begin with, $K$ agents are set randomly. Then, the positions of agents are initialized, and the value of fitness function $\varGamma (\bm{\lambda})$ for $\bm{\lambda}$ can be calculated according to (\ref{fitness-lambda}). Furthermore, at each iteration, the corresponding parameters $a$, $A$ and $C$ are updated (line \ref{line5}). Each agent adopts different strategies based on the probability and updates its position in the next iteration. Specifically, each agent has a possibility of $0.5$ to update the parameters and its positions towards optimal in the spiral manner according to (\ref{update-Xsu}) (lines \ref{line7}-\ref{line8}). Otherwise, if the parameter $|A|\geq 1$, the agent is expected to randomly select another agent and search for prey to its direction (lines \ref{line11}-\ref{line12}). Then, the position in the next iteration is obtained by (\ref{update-Xsp}). When $|A|<1$, the agent encircling prey and linearly approach the individual with best fitness value via (\ref{update-X1}) (lines \ref{line14}-\ref{line15}). After all agents updating their positions, their fitness function values are calculated and the position of the current best agent is updated (line \ref{line19}). The above process is repeated until the result converges \cite{Tutorial-Palomar}. Finally, the offloading decisions concerning $\bm{\lambda}$ can be derived. The implementation of Algorithm \ref{alg2:BWOA} is mainly related with the number and dimension of agents, i.e., $K$ and $MN$, respectively. Moreover, the $MN+M+2N+3$ constraints of problem $\textbf{P4}$ impact the computational complexity to calculate the index functions. Hence, the complexity of Algorithm \ref{alg2:BWOA} is $\mathcal{O}(KMN(MN+M+2N+3)I_2^{max})$.

\section{Simulation Results}\label{sec5}

\begin{figure}[t]
	\centering
	\subfloat[Distribution of GUs.]{
	\includegraphics[width=0.8\linewidth]{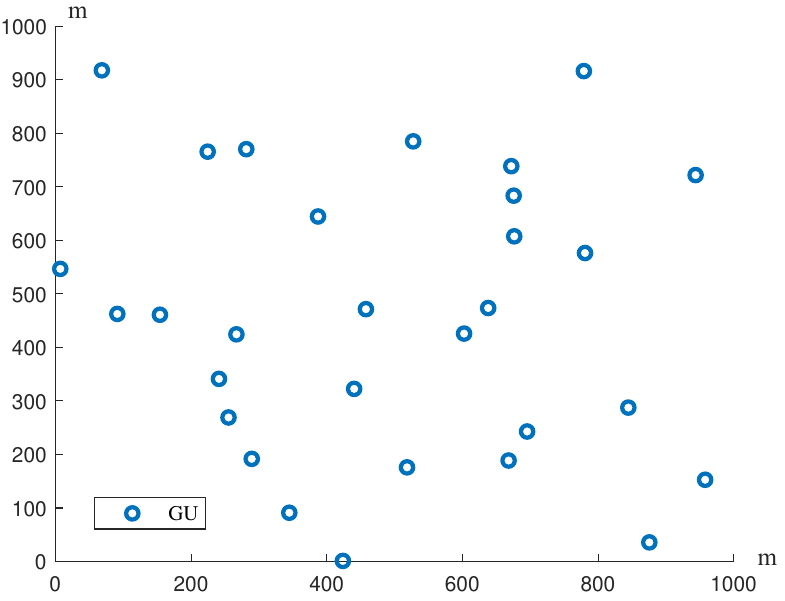}
	}
	\quad
	\subfloat[UAV deployment results.]{
	\includegraphics[width=0.8\linewidth]{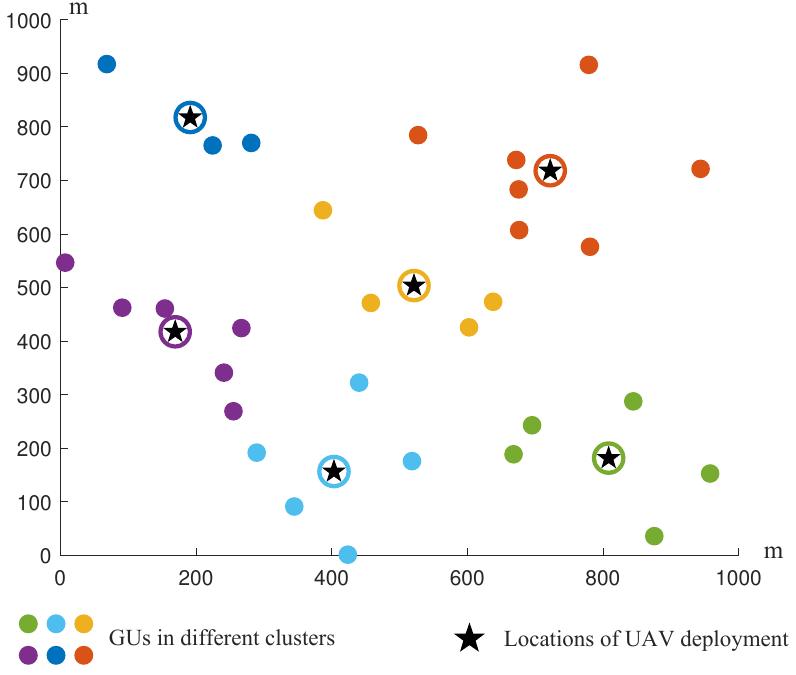}
	}
	\caption{Cluster results via the WKD algorithm (30 GUs and 6 UAVs in the 1km$\times$1km area).}\label{result-distribution}
\end{figure}

\begin{figure}[t]
	\centering
	\subfloat[]{
	\includegraphics[width=0.8\linewidth]{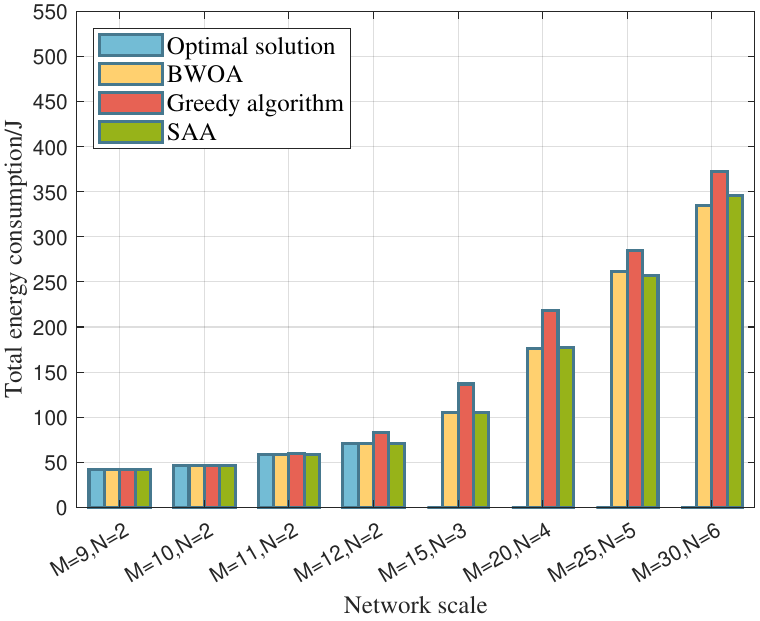}
	}
	\quad
	\subfloat[]{
	\includegraphics[width=0.815\linewidth]{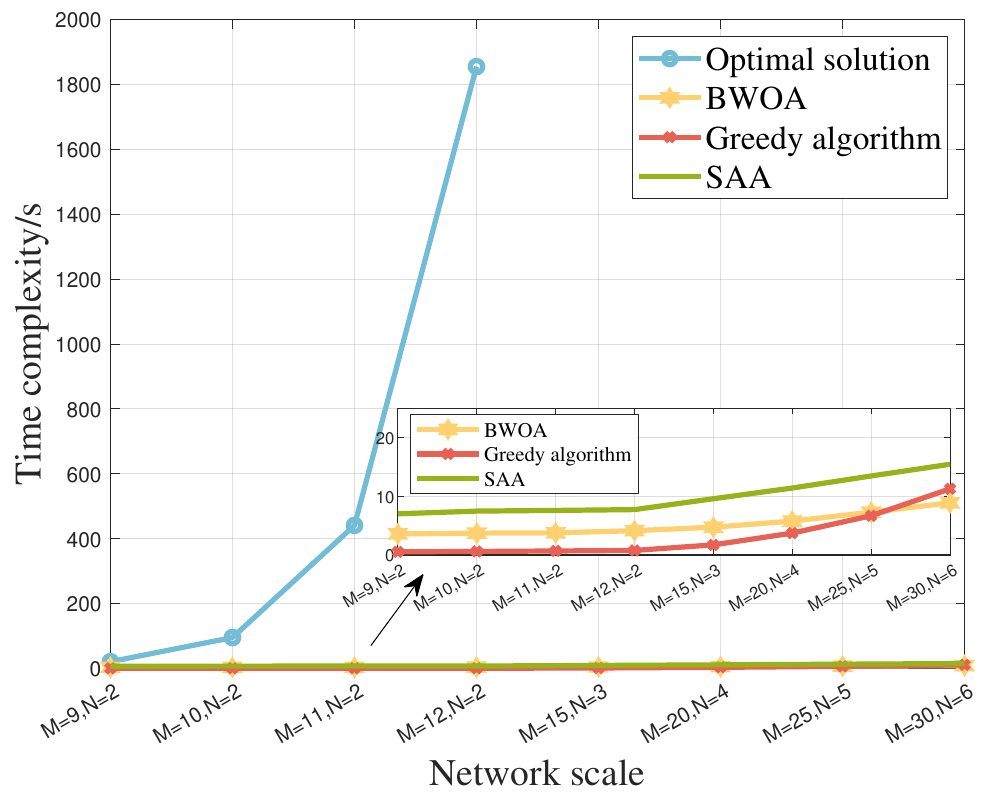}
	}
	\caption{Performance of different algorithms under different network scales.}\label{result-algorithm}
\end{figure}

\begin{figure}[t]
	\centering
	\subfloat[]{
	\includegraphics[width=0.8\linewidth]{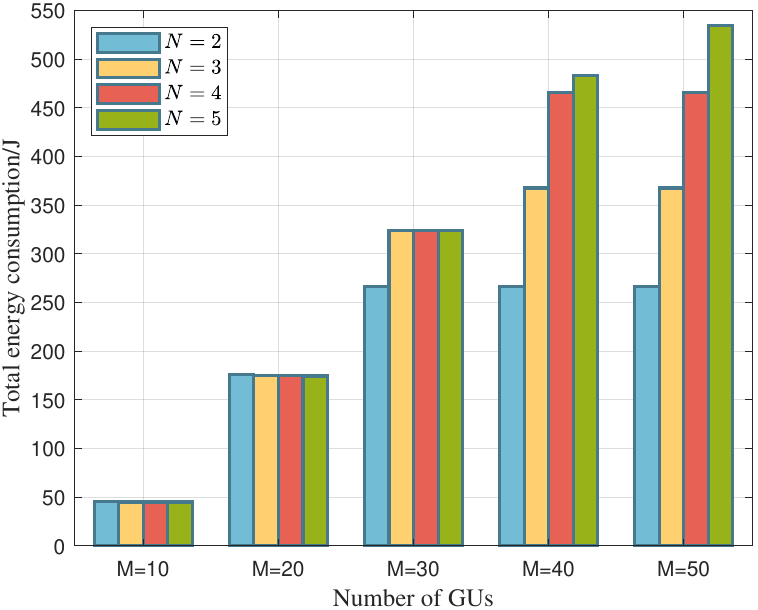}
	}
	\quad
	\subfloat[]{
	\includegraphics[width=0.8\linewidth]{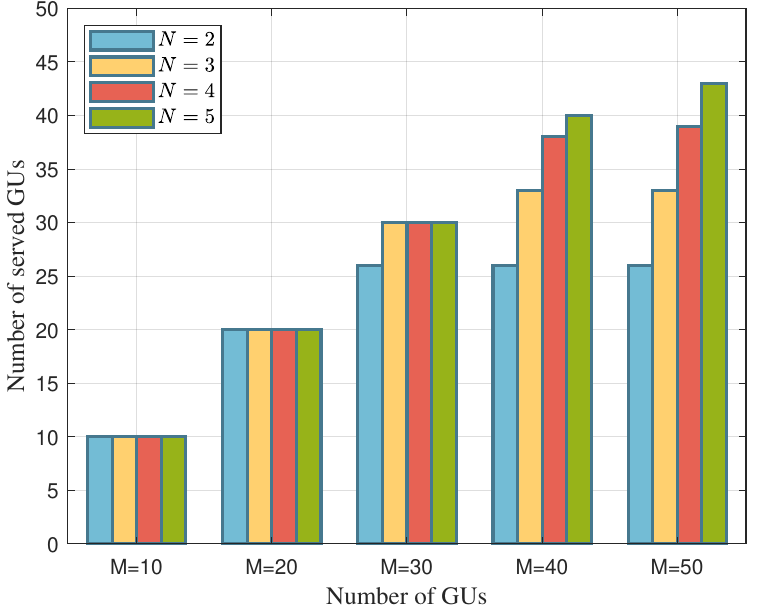}
	}
	\caption{Energy cost and number of served GUs $\emph{v.s.}$ the number of GUs with different UAV scale.}\label{result-servednumber}
\end{figure}

In this section, simulations are conducted to evaluate the proposed algorithms. The GUs are distributed randomly in a $1$km$\times$$1$km area. The altitude of UAVs is $100$m and the coordinate of the HAP is $\varpi_h=[500,500,2\times 10^4]$m. The major parameters are in Table \ref{parameter} \cite{Hierarchical_Jia}, \cite{AI-based_Maryam}. The data size of tasks is within $[50, 70]$Mbits.

The distribution of GUs and UAVs is depicted in Fig. \ref{result-distribution}(a), including 30 GUs and 6 UAVs. Algorithm \ref{alg1:K-means} is applied to obtain the deployment positions of UAVs, and the clustering results of 30 GUs as well as the positions of 6 UAVs are shown in Fig. \ref{result-distribution}(b). It is observed that UAVs are deployed at the centers of GU clusters.

To evaluate the effectiveness and efficiency of the proposed BWOA, it is compared with the optimal solution (obtained by exhaustive search), greedy offloading algorithm, as well as the simulated annealing algorithm (SAA), as shown in Fig. \ref{result-algorithm}. Specifically, from Fig. \ref{result-algorithm}(a), it is observed that the optimization results of BWOA are close to optimal and outperform the greedy algorithm and SAA. Meanwhile, Fig. \ref{result-algorithm}(b) provides the corresponding time complexity. Although the optimal solution is obtained by the exhaustive search, the time cost is not acceptable in the large-scale situations. Besides, the time cost of BWOA is lower than the greedy algorithm and SAA with the network scale increasing. Hence, the BWOA shows near optimal performance with lower time complexity.

Fig. \ref{result-servednumber} depicts the performance of energy cost and the number of served GUs with different number of UAVs. As the number of served GUs increases, more energy is consumed as expected. Moreover, with the same number of served GUs, the energy consumption is almost unchanged by simply increasing the number of UAVs. Besides, note that there exist unserved GUs if the resources are inadequate or the QoS demand of tasks cannot be comprehensively satisfied, indicating that the number of GUs in the network exceeds the network's capacity and the constraints are violated with penalty factors. This is on accounting for the insufficient UAVs which leads to the limited aerial resources and less served GUs. In such a case, the number of served GUs can be improved by deploying more UAVs to alleviate load pressure.

The superiority of proposed WKD based algorithm is verified in Fig. \ref{result-position}. The total energy cost obtained via our proposed algorithms is compared with the results that UAVs are randomly deployed and GUs are randomly connected to the UAVs (R\&R). Note that less energy is consumed compared with the baseline result with the same served GUs. Moreover, compared with the R\&R method, the proposed WKD method can increase the number of GUs that the system can accommodate and make a better use of available resources. Besides, the overloading or under utilization of UAVs can be avoided.

Fig. \ref{result-robust} verifies the robustness of the proposed mechanism (CVaR-based DRCC). As provided in Fig. \ref{result-robust}(a), we compare the results optimized via DRCC and CVaR based method with the "ideal circumstance" to evaluate the impact of imperfect CSI, in which the accurate CSI is obtained. When there exist CSI estimation errors, more energy is consumed compared with the ideal circumstance. This is accounted by the fact that the MEC servers allocate more computing resources for the tasks to cope with the impact of environmental disturbances, as shown in Fig. \ref{result-robust}(b).

The impacts of the tolerable delay of tasks are shown in Fig. \ref{result-latency}. As the maximum tolerable delay increases, the energy consumption decreases, due to the less consumption of computing tasks in Fig. \ref{result-latency}(a), since it provides more time for MEC processing and the required CPU frequency is decreased, as shown in Fig. \ref{result-latency}(b). Hence, the computation energy is decreased. In addition, Fig. \ref{result-GUpower} shows the influence of GU transmission power. It is observed that with the increment of transmission power of GUs, both the total energy consumption and the required CPU frequency are decreased. It is explained that the transmission delay is decreased due to the increment of transmission rate, so the time is sufficient for computation.

The effect of transmission power of UAVs on the performance is discussed in Fig. \ref{result-UAVpower}. As expected in Fig. \ref{result-UAVpower}(a), UAVs consume more energy with the increment of transmission power. Moreover, in Fig. \ref{result-UAVpower}(b), more transmission power of UAVs leads to less CPU frequency consumption. It is explained that the transmission rate for G2U link is growing with the increment of transmission power, and thus, there is more remaining time for MEC processing, resulting in less CPU frequency required.
\begin{figure}
	\centering
	\includegraphics[width=0.865\linewidth]{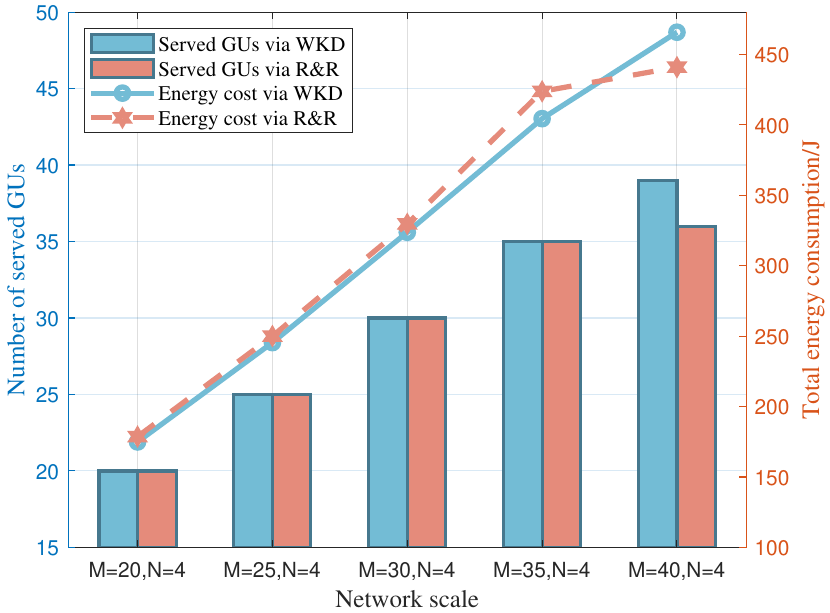}
	\caption{Performance of the algorithm with deployment optimization.}\label{result-position}
\end{figure}

\begin{figure}
	\centering
	\subfloat[]{
	\includegraphics[width=0.8\linewidth]{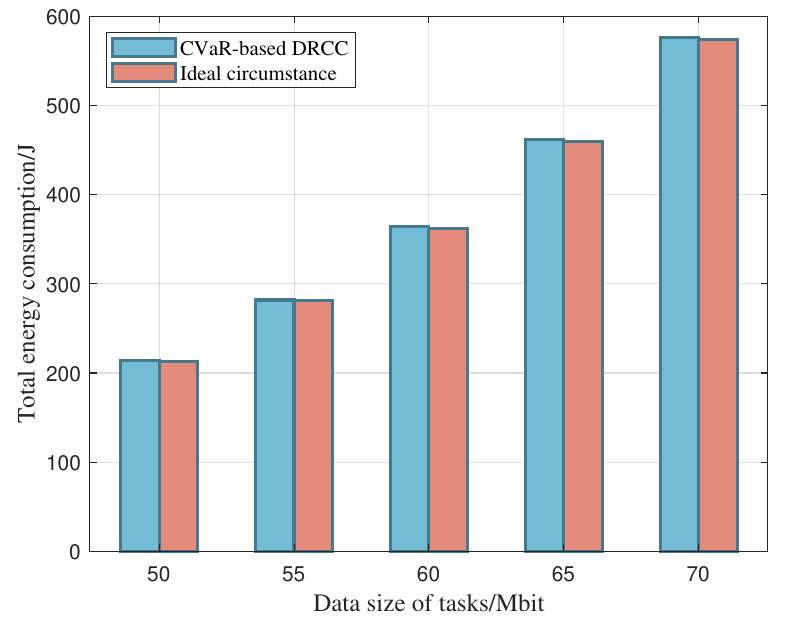}
	}
	\quad
	\subfloat[]{
	\includegraphics[width=0.8\linewidth]{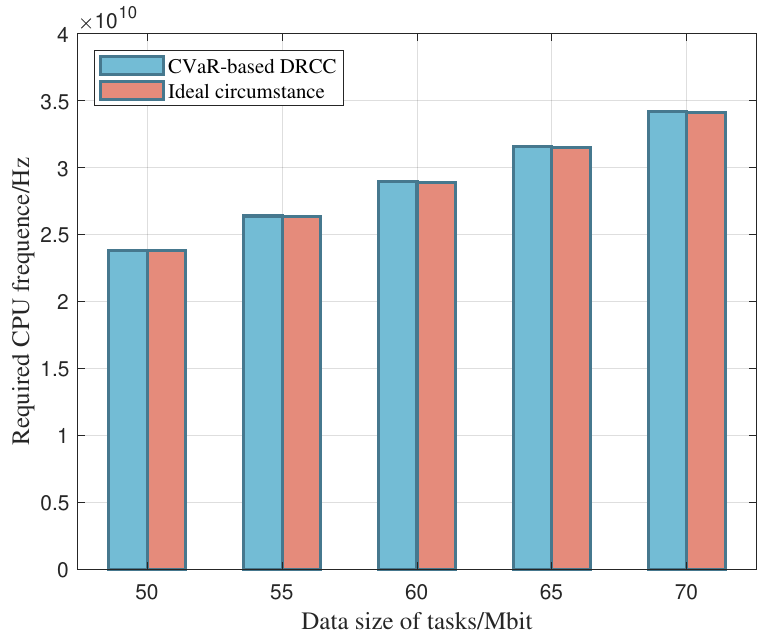}
	}
	\caption{Performance of the model with CSI estimation errors and ideal conditions.}\label{result-robust}
\end{figure}

\begin{figure}
	\centering
	\subfloat[]{
	\includegraphics[width=0.8\linewidth]{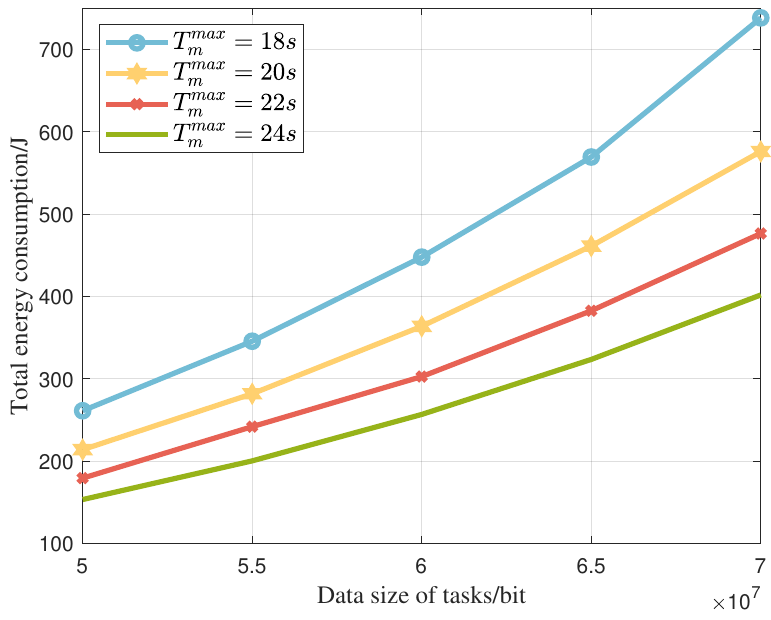}
	}
	\quad
	\subfloat[]{
	\includegraphics[width=0.8\linewidth]{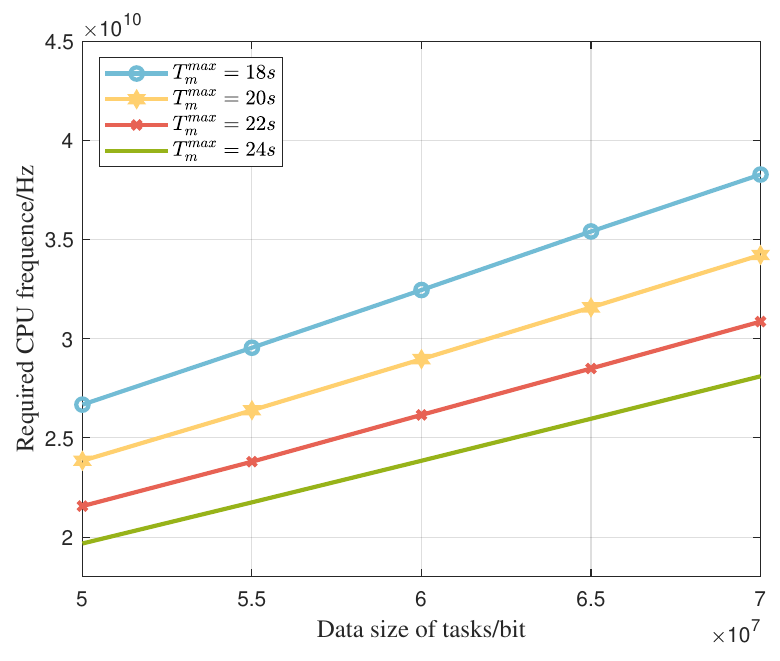}
	}
	\caption{Impact of tolerable delay on the network performance.}\label{result-latency}
\end{figure}

\begin{figure*}
	\centering
	\subfloat[]{
	\includegraphics[width=0.4\linewidth]{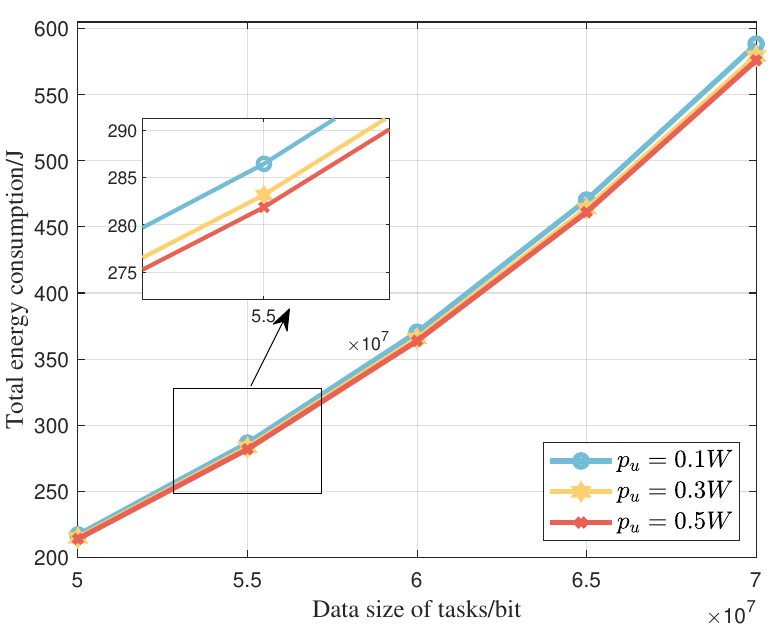}
	}
	\quad
	\subfloat[]{
	\includegraphics[width=0.4\linewidth]{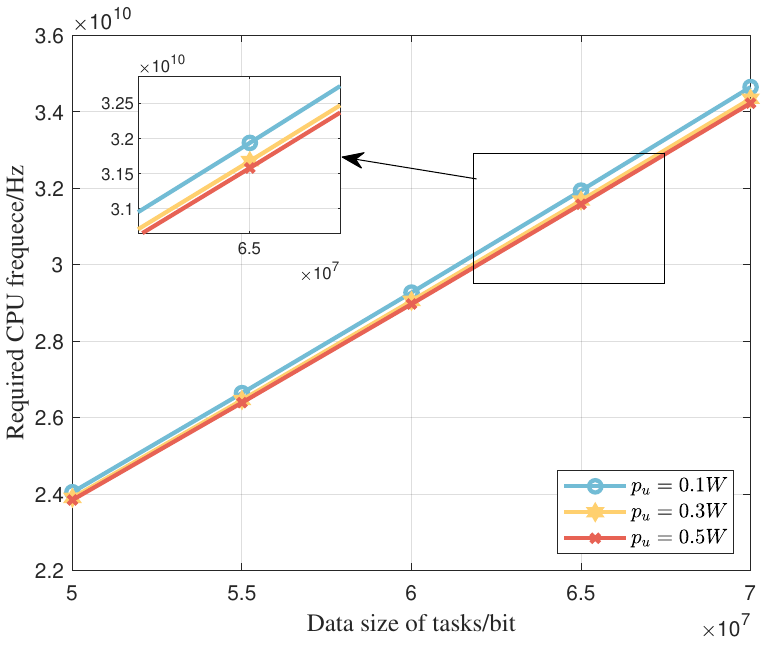}
	}
	\caption{Impact of transmission power of GUs.}\label{result-GUpower}
\end{figure*}
\begin{figure*}
	\centering
	\subfloat[]{
	\includegraphics[width=0.4\linewidth]{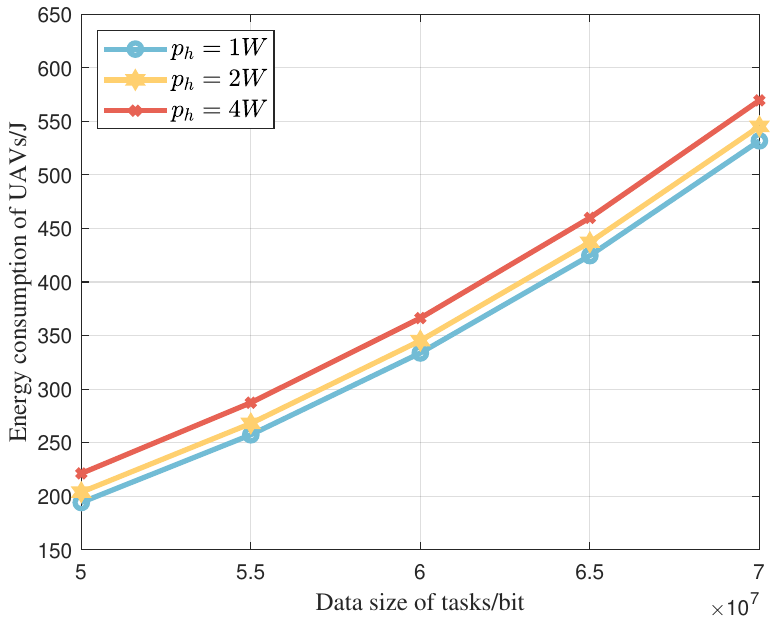}
	}
	\quad
	\subfloat[]{
	\includegraphics[width=0.4\linewidth]{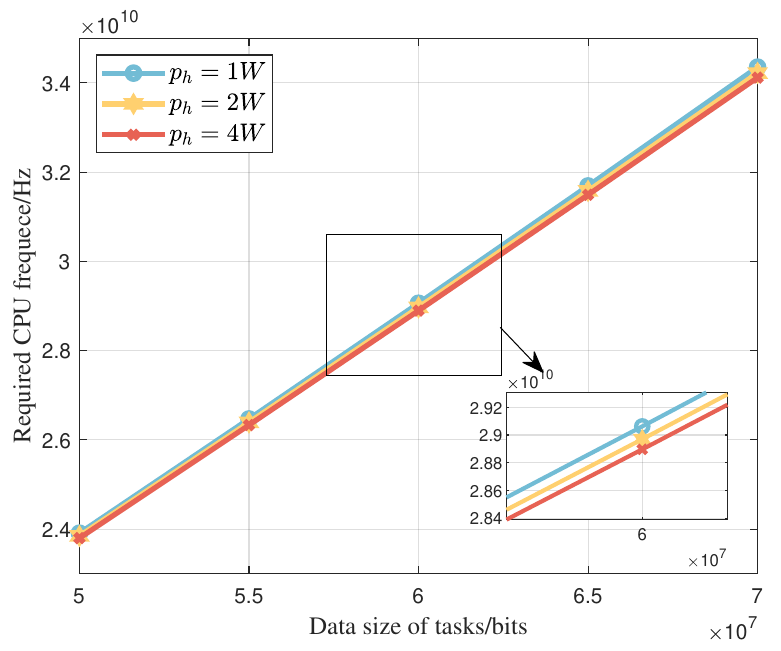}
	}
	\caption{Impact of transmission power of UAVs.}\label{result-UAVpower}
\end{figure*}

\section{Conclusion}\label{sec6}
In this paper, we proposed a hierarchical aerial MEC model consisting of multiple UAVs and an HAP. Considering the limitation of battery capacities, we jointly optimized the UAV deployment strategies, resources allocation and offloading decisions to minimize the total energy consumption. Taking into account the imperfect CSI affected by the unpredictable environmental factors, we established an uncertainty set for CSI estimation errors and formulated the problem with the chance constraint. As for the solution, we designed the WKD based algorithm for the deployment of UAVs. Moreover, the chance constraint was transformed into a DRCC, and accordingly approximated into an MISOCP form under the worst case by employing the CVaR mechanism. Additionally, the MIP problem was further decomposed into two subproblems. To tackle the binary subproblem, BWOA was designed. Finally, we conducted extensive simulations to evaluate the robustness and efficiency. The results showed superiority of the proposed algorithm in the near optimal solution and low time complexity compared with other baseline algorithms. In the future works, the dynamic deployment for UAVs will be further investigated.

\bibliographystyle{IEEEtran}
\bibliography{reference.bib}

\begin{IEEEbiography}[{\includegraphics[width=1in,height=1.25in,clip,keepaspectratio]{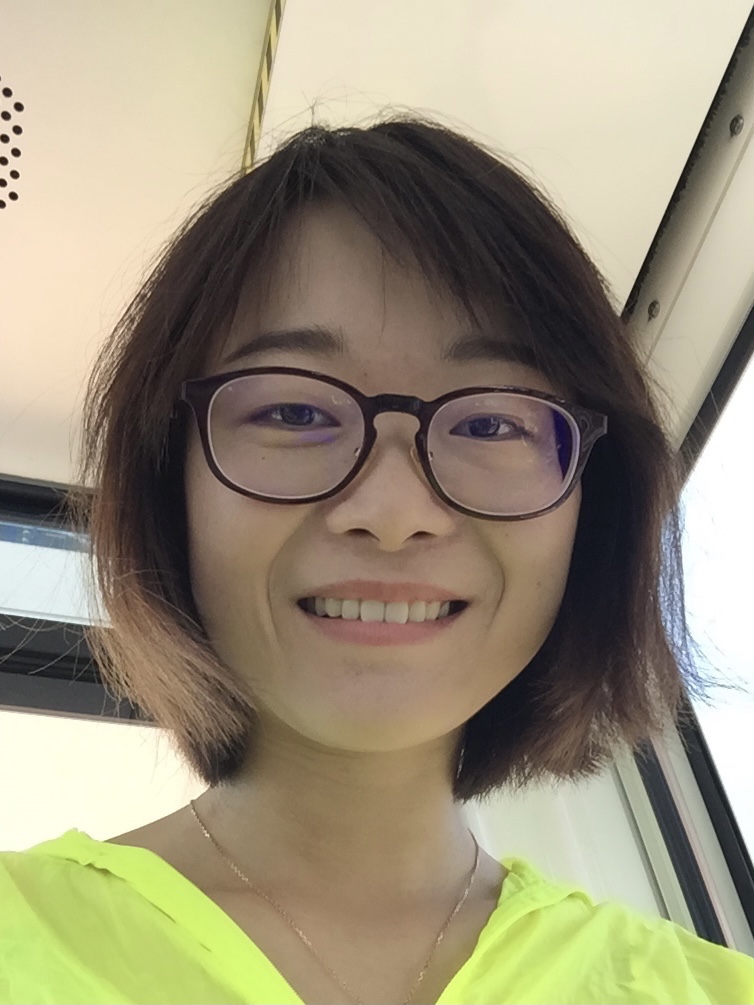}}] {Ziye Jia} (Member, IEEE) received the B.E., M.S., and Ph.D. degrees in communication and information systems from Xidian University, Xi'an, China, in 2012, 2015, and 2021, respectively. From 2018 to 2020, she was a Visiting Ph.D. Student with the Department of Electrical and Computer Engineering, University of Houston. She is currently an Associate Professor with the Key Laboratory of Dynamic Cognitive System of Electromagnetic Spectrum Space, Ministry of Industry and Information Technology, Nanjing University of Aeronautics and Astronautics, Nanjing, China. Her current research interests include space-air-ground networks, aerial access networks, UAV networking, resource optimization,  machine learning, etc.
\end{IEEEbiography}

\begin{IEEEbiography}[{\includegraphics[width=1in,height=1.25in,clip,keepaspectratio]{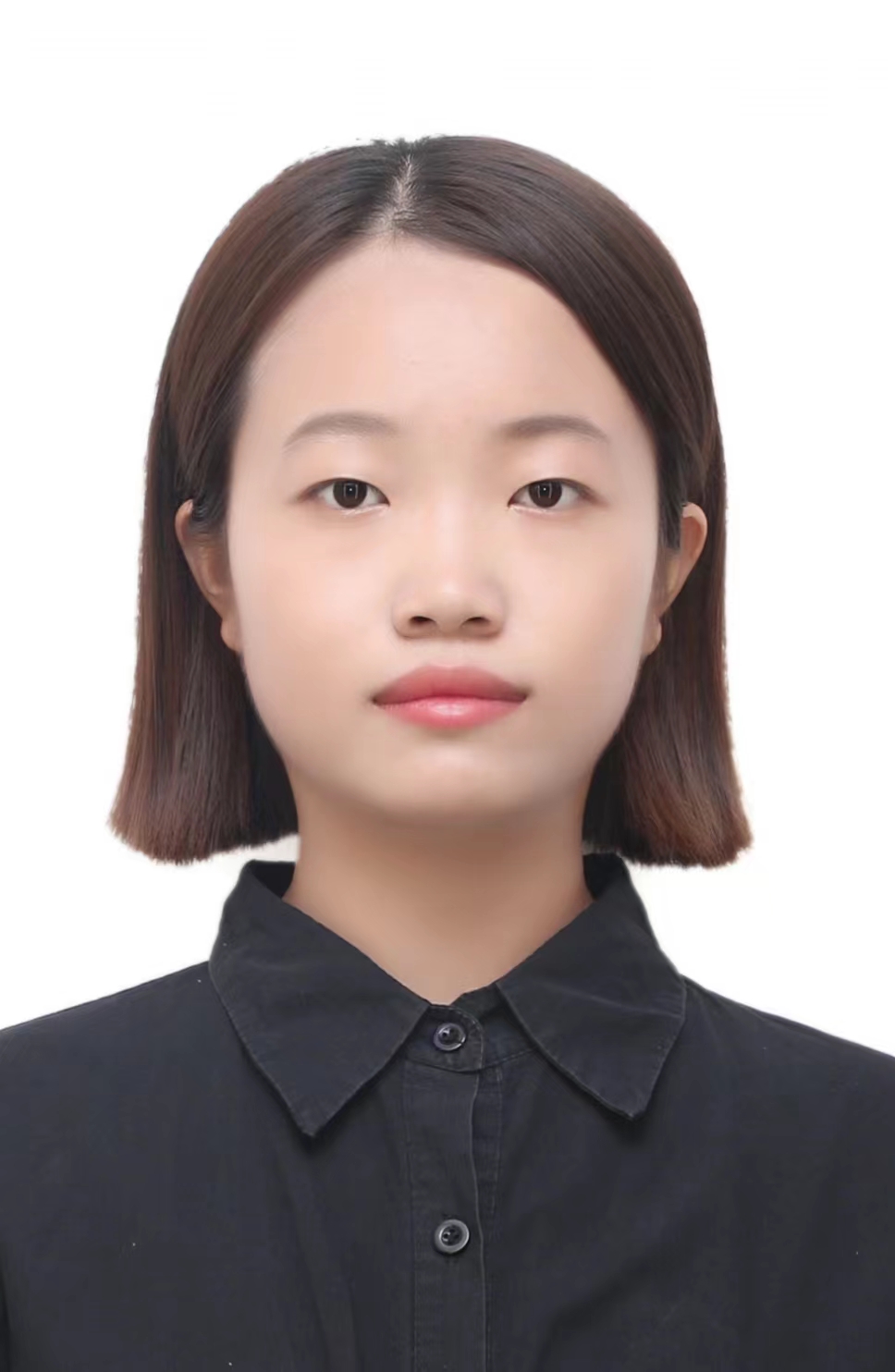}}] {Can Cui} is a postgraduate student with the College of Electronic and Information Engineering, Nanjing University of Aeronautics and Astronautics, Nanjing, China. Her current research interests include convex optimization and its applications in computation offloading and resource allocation, edge computing, and low-altitude intelligent networks.
\end{IEEEbiography}

\begin{IEEEbiography}[{\includegraphics[width=1in,height=1.25in,clip,keepaspectratio]{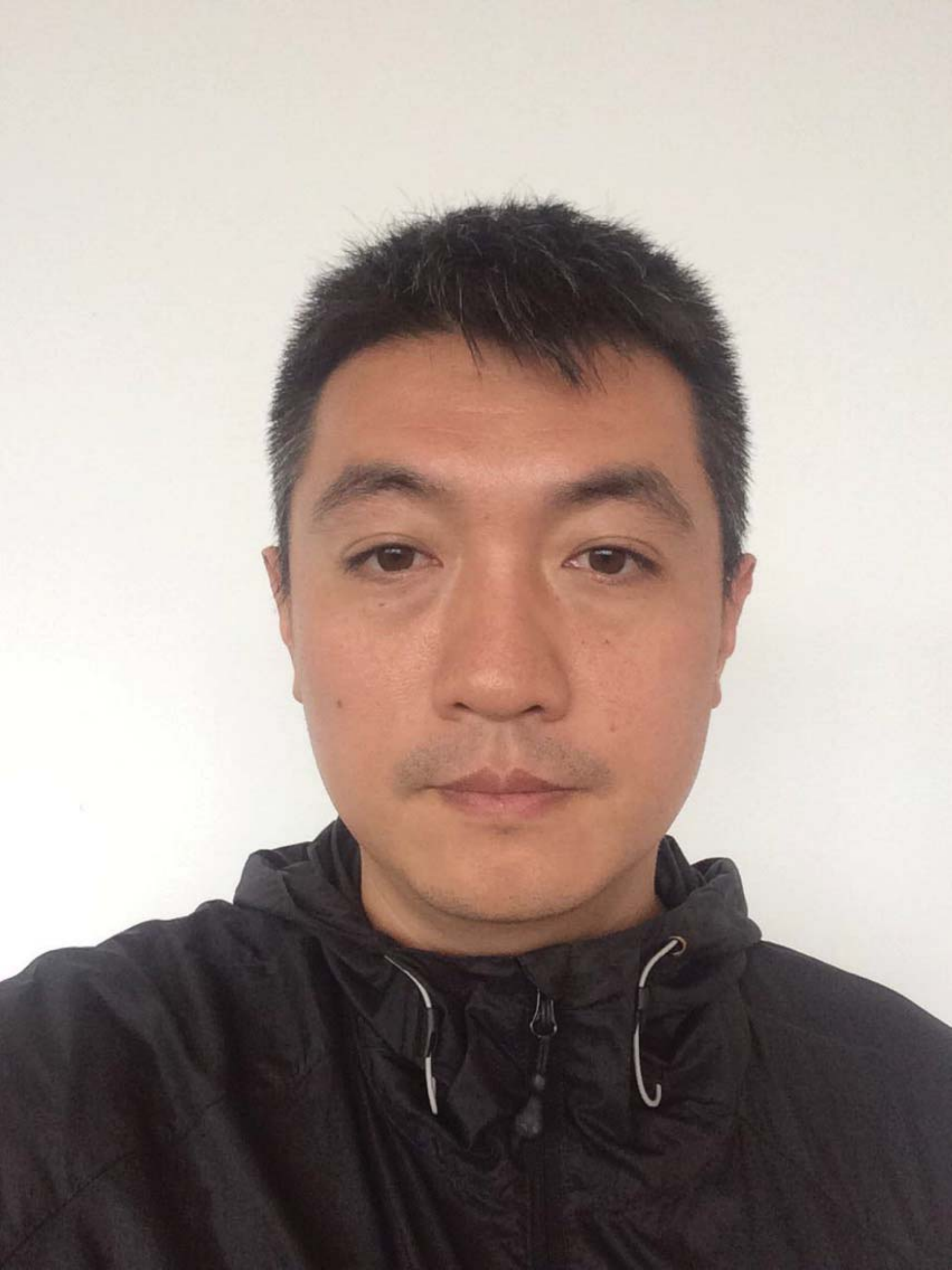}}]{Chao Dong} (Member, IEEE) received his Ph.D degree in Communication Engineering from PLA University of Science and Technology, China, in 2007. He is now a full professor with College of Electronic and Information Engineering, Nanjing University of Aeronautics and Astronautics, China. His current research interests include D2D communications, UAVs swarm networking and anti-jamming network protocol.
\end{IEEEbiography}

\begin{IEEEbiography}[{\includegraphics[width=1in,height=1.25in,clip,keepaspectratio]{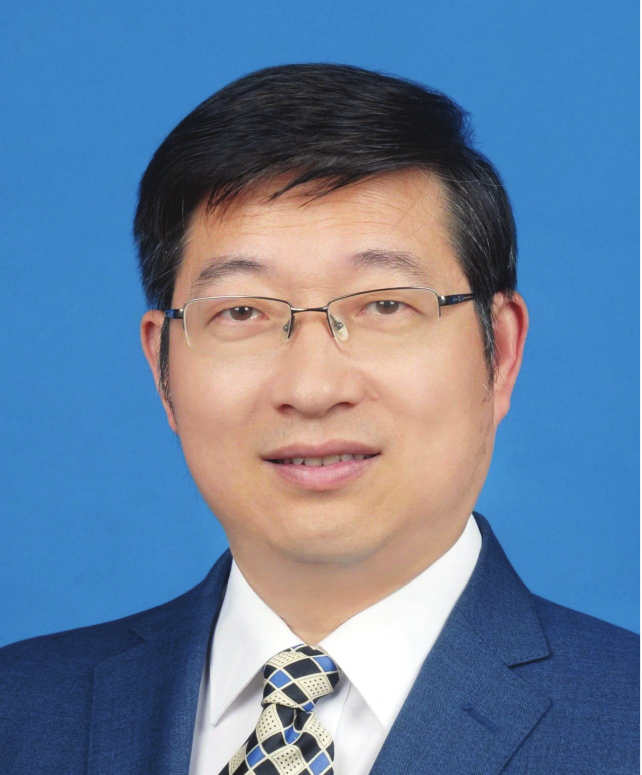}}] {Qihui Wu} (Fellow, IEEE) received the B.S. degree in communications engineering and the M.S. and Ph.D. degrees in communications and information systems from the Institute of Communications Engineering, Nanjing, China, in 1994, 1997, and 2000, respectively. From 2003 to 2005, he was a Post-Doctoral Research Associate with Southeast University, Nanjing. From 2005 to 2007, he was an Associate Professor with the College of Communications Engineering, PLA University of Science and Technology, Nanjing, where he was a Full Professor, from 2008 to 2016. From March 2011 to September 2011, he was an Advanced Visiting Scholar with the Stevens Institute of Technology, Hoboken, NJ, USA. Since May 2016, he has been a Full Professor with the College of Electronic and Information Engineering, Nanjing University of Aeronautics and Astronautics, Nanjing. His current research interests include wireless communications and statistical signal processing, with an emphasis on system design of software defined radio, cognitive radio, and smart radio.
\end{IEEEbiography}

\begin{IEEEbiography}[{\includegraphics[width=1in,height=1.25in,clip,keepaspectratio]{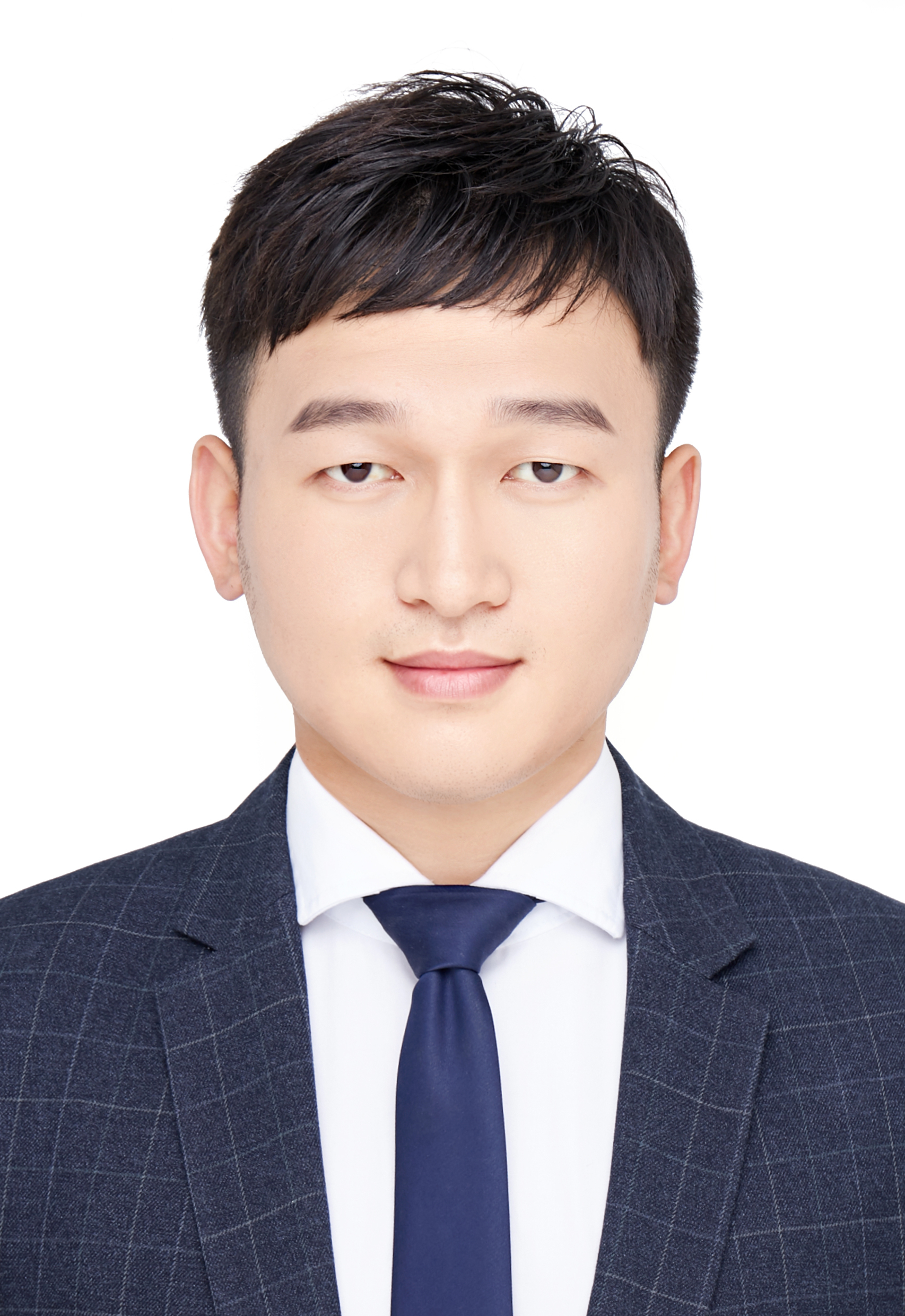}}]{Zhuang Ling} received the B.S. and Ph.D. degrees in the College of Communication Engineering, Jilin University, Jilin, China, in 2016 and 2021, respectively. Currently, he serves as an associate professor in the College of Communication Engineering, Jilin University, Changchun, Jilin, China. He was formerly a postdoctoral fellow in the same college. In 2019, he served as a visiting Ph.D. student in the Department of Electrical and Computer Engineering at the University of Houston. His research interests include Wireless Body Area Network, High-Speed Railway, Backscatter Communications, Energy Harvesting, Age of Information, and Distributionally Robust Optimization.
\end{IEEEbiography}

\begin{IEEEbiography}[{\includegraphics[width=1in,height=1.25in,clip,keepaspectratio]{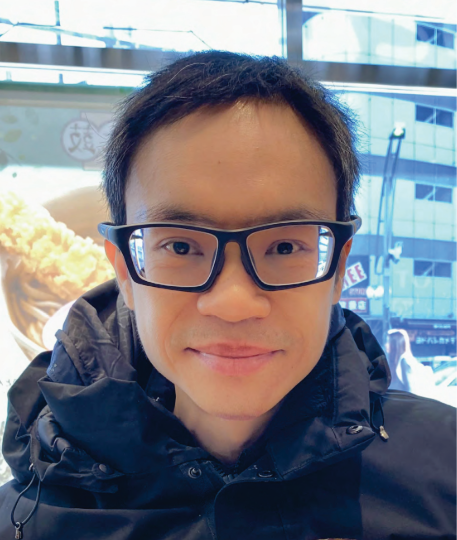}}] {Dusit Niyato} (Fellow, IEEE) is a professor in the College of Computing and Data Science, at Nanyang Technological University, Singapore. He received B.Eng. from King Mongkuts Institute of Technology Ladkrabang (KMITL), Thailand and Ph.D. in Electrical and Computer Engineering from the University of Manitoba, Canada. His research interests are in the areas of mobile generative AI, edge intelligence, quantum computing and networking, and incentive mechanism design.
\end{IEEEbiography}

\begin{IEEEbiography}[{\includegraphics[width=1in,height=1.25in,clip,keepaspectratio]{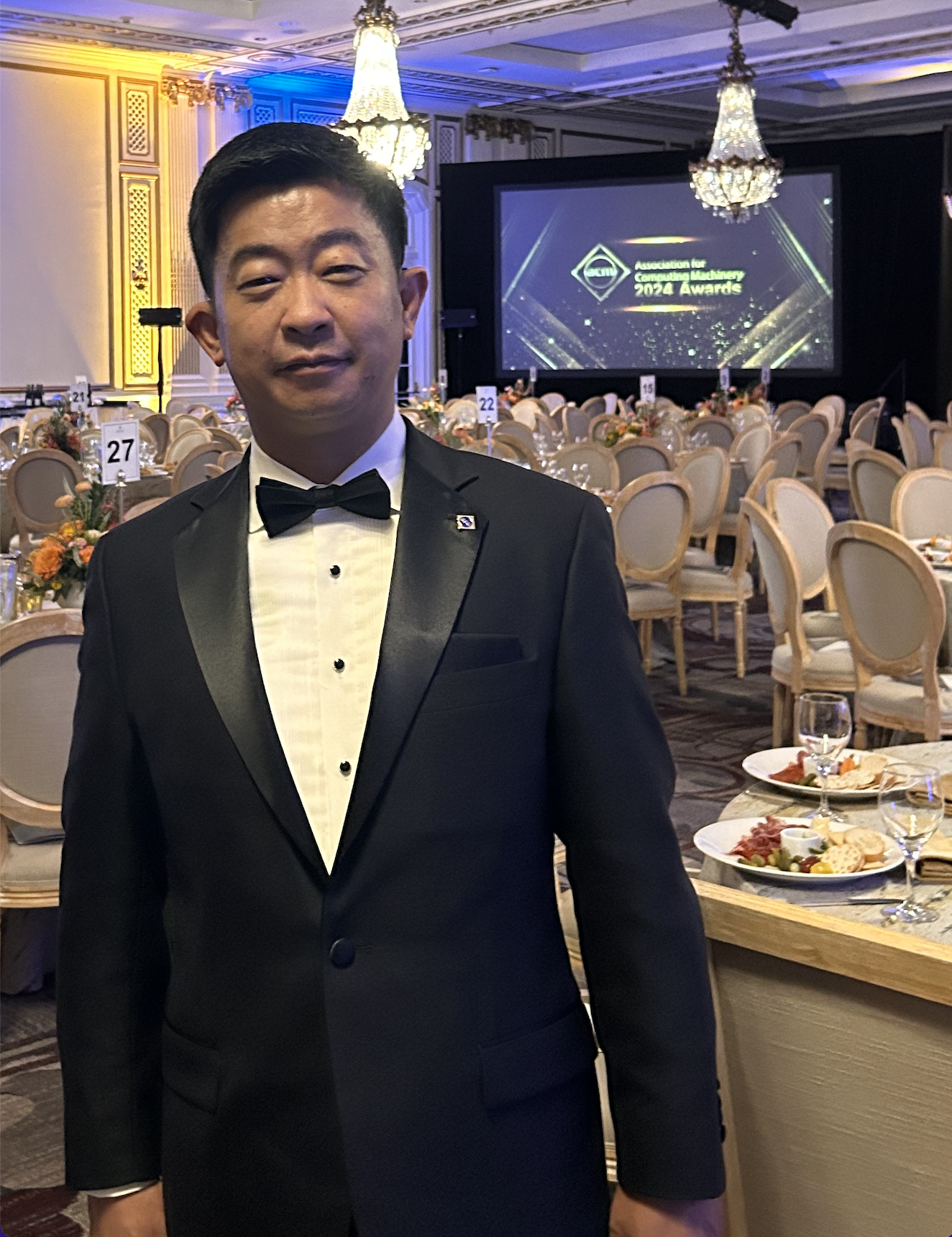}}] {Zhu Han} (Fellow, IEEE) received the B.S. degree in electronic engineering from Tsinghua University, in 1997, and the M.S. and Ph.D. degrees in electrical and computer engineering from the University of Maryland, College Park, in 1999 and 2003, respectively. From 2000 to 2002, he was an R\&D Engineer of JDSU, Germantown, Maryland. From 2003 to 2006, he was a Research Associate at the University of Maryland. From 2006 to 2008, he was an assistant professor at Boise State University, Idaho. Currently, he is a John and Rebecca Moores Professor in the Electrical and Computer Engineering Department as well as in the Computer Science Department at the University of Houston, Texas. Dr. Han's main research targets on the novel game-theory related concepts critical to enabling efficient and distributive use of wireless networks with limited resources. His other research interests include wireless resource allocation and management, wireless communications and networking, quantum computing, data science, smart grid, carbon neutralization, security and privacy.  Dr. Han received an NSF Career Award in 2010, the Fred W. Ellersick Prize of the IEEE Communication Society in 2011, the EURASIP Best Paper Award for the Journal on Advances in Signal Processing in 2015, IEEE Leonard G. Abraham Prize in the field of Communications Systems (best paper award in IEEE JSAC) in 2016, IEEE Vehicular Technology Society 2022 Best Land Transportation Paper Award, and several best paper awards in IEEE conferences. Dr. Han was an IEEE Communications Society Distinguished Lecturer from 2015 to 2018 and ACM Distinguished Speaker from 2022 to 2025, AAAS fellow since 2019, and ACM Fellow since 2024. Dr. Han is a 1\% highly cited researcher since 2017 according to Web of Science. Dr. Han is also the winner of the 2021 IEEE Kiyo Tomiyasu Award (an IEEE Field Award), for outstanding early to mid-career contributions to technologies holding the promise of innovative applications, with the following citation: ``for contributions to game theory and distributed management of autonomous communication networks."
\end{IEEEbiography}

\end{document}


\pagestyle{empty}

\begin{appendices}
	
\section{}\label{appendix}
For a measurable loss function $\phi(\xi):\mathbb{R}^k\rightarrow\mathbb{R}$, the definition of CVaR under safety factor $\alpha$ is
\begin{equation}
	\mathbb{P}-CVaR_\alpha(\phi(\xi))=\underset{\beta\in\mathbb{R}}{inf}\left\{\beta+\frac{1}{1-\alpha}\mathbb{E_P}\left[\phi(\xi)-\beta\right]^+\right\},
\end{equation}
where $\mathbb{P}$ is the probability distribution on $\mathbb{R}^k$. $[\phi(\xi)-\beta]^+$ represents the maximum value between $\phi(\xi)-\beta$ and $0$. $\beta$ is an auxiliary variable on $\mathbb{R}$. Therefore, the worst-case CVaR is
\begin{equation}
	\underset{\mathbb{P}\in\mathcal{P}}{sup}\quad\underset{\beta\in\mathbb{R}}{inf}\left\{\beta+\frac{1}{1-\alpha}\mathbb{E_P}\left[\Theta\xi+\theta^0-\beta\right]^+\right\}.
\end{equation}
According to the Saddle Point Theorem, the operators $\underset{\mathbb{P}\in\mathcal{P}}{sup}$ and $\underset{\beta\in\mathbb{R}}{inf}$ can be interchanged, i.e.,
\begin{equation}
	\begin{split}
		\underset{\mathbb{P}\in\mathcal{P}}{sup}\quad\underset{\beta\in\mathbb{R}}{inf}\left\{\beta+\frac{1}{1-\alpha}\mathbb{E_P}\left[\Theta\xi+\theta^0-\beta\right]^+\right\}=\\
		\underset{\beta\in\mathbb{R}}{inf}\left\{\beta+\frac{1}{1-\alpha}\underset{\mathbb{P}\in\mathcal{P}}{sup}\mathbb{E_P}\left[\Theta\xi+\theta^0-\beta\right]^+\right\}.
	\end{split}
\end{equation}
For any fixed $\Theta$, $\theta^0$ and $\beta$, the optimization problem of inner layer is
\begin{equation}\label{inner-problem}
	\underset{\mathbb{P}\in\mathcal{P}}{sup}\mathbb{E_P}\left[\Theta\xi+\theta^0-\beta\right]^+.
\end{equation}
Let $\varrho =\Theta\xi$, and based on the uncertainty set $\mathcal{P}$, where $\mathcal{P}=\left\{\mathbb{P}\in\mathcal{P}|\mathbb{E_P}(\xi)=\mu_\xi,\mathbb{D_P}(\xi)=\sigma_\xi^{2}\right\}$, the mean value and variance of $\varrho$ are $\Theta\mu_\xi$ and $\Theta^2\sigma_\xi^{2}$, respectively. Hence, the inner layer optimization (\ref{inner-problem}) is equivalent to
\begin{equation}\label{e59}
	\begin{split}
		&\underset{\varphi\in\mathcal{C}}{sup}\quad\int_\mathbb{R}\left(\Theta\xi+\theta^0-\beta\right)^+\varphi\,d\varrho\\
		\textrm{s.t.}\quad&\int_\mathbb{R}\varphi\,d\varrho=1,\\
		&\int_\mathbb{R}\varrho\varphi\,d\varrho=\Theta\mu_\xi,\\
		&\int_\mathbb{R}\varrho^2\varphi\,d\varrho=\Theta^2\mu_\xi^2+\Theta^2\sigma_\xi^2,
	\end{split}
\end{equation}
where $\varphi$ denotes the decision variables and $\mathcal{C}$ represents the cone of nonnegative Borel measures on $\mathbb{R}$. By introducing the dual variables $\chi_1$, $\chi_2$ and $\chi_3$ of the constraints in (\ref{e59}), the Lagrangian dual problem is reformulated as
\begin{align}
	\underset{\chi_1,\chi_2,\chi_3}{inf}&\quad \chi_1+\chi_2\Theta\mu_\xi+\chi_3\left(\Theta^2\mu_\xi^2+\Theta^2\sigma_\xi^2\right)\nonumber\\
	\textrm{s.t.}\quad&\chi_1+\chi_2 \varrho+\chi_3\varrho^2\geq0,\label{e60}\\
	&\chi_1+\chi_2\varrho+\chi_3\varrho^2\geq\Theta+\varrho-\beta,\label{e61}\\
	&\chi_3>0.
\end{align}
Then, $\varrho^*=-\frac{\chi_2}{2\chi_3}$ is substituted into (\ref{e60}) and (\ref{e61}), and based on the Strong Duality Theorem, (\ref{e60}) and (\ref{e61}) can be equivalently transformed into
\begin{equation}
	\chi_1-\frac{\chi_2^2}{4\chi_3}\geq 0,
\end{equation}
and
\begin{equation}
	\chi_1+\beta-\Theta-\frac{\left(\chi_2-1\right)^2}{4\chi_3}\geq 0.
\end{equation}
Considering $\chi_1=e+\frac{1}{4z}\left(q-\Theta\mu_\xi\right)^2$, $\chi_2=\frac{q-\Theta\mu_\xi}{2z}$ and $\chi_3=\frac{1}{4z}$, (\ref{e59}) is reformulated as
\begin{equation}
	\begin{split}
		\underset{\beta,e,q,z,s}{inf}& e+s ,\\
		&e-\theta ^0+\beta +q-\Theta \mu_\xi -z>0,\\
		&e\geq 0,z> 0,\\
		&4zs\geq q^2+\Theta^2\sigma_\xi^2.
	\end{split}
\end{equation}
Consequently, it is proved that the worst-case CVaR under the safety factor $\alpha$ can be approximated by a tractable convex programming problem as
\begin{equation}
	\begin{split}
		\underset{\beta,e,q,z,s}{inf}&\beta +\frac{1}{1-\alpha}\left( e+s \right) ,\\
		&e-\theta ^0+\beta +q-\Theta \mu_\xi -z>0,\\
		&e\geq 0,z> 0,\\
		&\begin{Vmatrix}
				q\\
				\Theta\sigma_\xi\\
				z-s
		\end{Vmatrix} \leq z+s.\\
	\end{split}
\end{equation}
\end{appendices}